\newtheorem{remark}{Remark}
\newtheorem{assumption}{Assumption}
\newtheorem{theorem}{Theorem}
\newtheorem{lemma}{Lemma}
\newtheorem{definition}{Definition}
\title{\LARGE \bf
Detection of Network and Sensor Cyber-Attacks in Platoons of Cooperative Autonomous Vehicles: a Sliding-Mode Observer Approach
}
\author{Twan Keijzer$^{1}$ and Riccardo M.G. Ferrari$^{1}$
\thanks{$^{1}$Twan Keijzer and Riccardo M.G. Ferrari are with Delft Centre for Systems and Control,
        Delft University of Technology, 2628 CD Delft, The Netherlands
        {\tt\small \{t.keijzer,r.ferrari\}@tudelft.nl}}%
}
\begin{document}
\maketitle
\thispagestyle{empty}
\pagestyle{empty}

%%%%%%%%%%%%%%%%%%%%%%%%%%%%%%%%%%%%%%%%%%%%%%%%%%%%%%%%%%%%%%%%%%%%%%%%%%%%%%%%
\begin{abstract}
Platoons of autonomous vehicles are being investigated as a way to increase road capacity and fuel efficiency. Cooperative Adaptive Cruise Control (CACC) is an approach to achieve such platoons, in which vehicles collaborate using wireless communication. While this collaboration improves performance, it also makes the vehicles vulnerable to cyber-attacks. In this paper the performance of a sliding mode observer (SMO) based approach to cyber-attack detection is analysed, considering simultaneous attacks on the communication and local sensors. To this end, the considered cyber-attacks are divided into three classes for which relevant theoretical properties are proven. Furthermore, the harm that attacks within each of these classes can do to the system while avoiding detection is analysed based on simulation examples.
\end{abstract}

%%%%%%%%%%%%%%%%%%%%%%%%%%%%%%%%%%%%%%%%%%%%%%%%%%%%%%%%%%%%%%%%%%%%%%%%%%%%%%%%
\section{Introduction}\label{sec:intro}
\noindent Cyber Physical Systems (CPS), such as cooperative adaptive cruise control (CACC) platoons, have been getting increased interest over the last decade. Their connectivity is often highly beneficial to their performance. However, it also makes them inherently vulnerable to cyber-attacks, which might form an even bigger threat to their safe operation. 

Detection of cyber-attacks has therefore become an important point of focus in research about CPS. Many cyber-attack detection approaches have since been proposed, of which overviews can be found in \cite{Ding2018,Dibaji2019,Sun2018,Ferrari2021-xs}. 

An important factor driving research on cyber-attack detection is the malicious intent of the attacker: using the available resources s/he will attempt to craft an attack harming the system while avoiding detection. In \cite{Teixeira2015}, the attackers ability to harm the system while remaining stealthy is analysed based on the resources available to them.

In previous work by the authors \cite{Jahanshahi_2018aa,Keijzer2019} a Sliding Mode Observer (SMO) based cyber-attack detection scheme for CACC platoons was proposed. The cyber-attacks considered in that work, however, were not crafted with malicious intent. Therefore, in this paper, the performance of this existing SMO-based detection scheme is analysed for the case of malicious attackers.

\noindent To this end, the paper presents the following contributions:
\begin{itemize}
    \item The existing SMO-based detection scheme is applied to a scenario where a CACC platoon is subjected to simultaneous attacks on the communication and all local measurements.
    \item The considered attacks are divided into three classes, and estimation and detection capabilities within these classes are proven analytically.
    \item One malicious attack is designed within each class with the intent to cause a crash in the platoon while staying undetected. The extend to which this is possible is analysed based on simulation results.
\end{itemize}
\section{CACC Platoon Model}\label{sec:problem}
\noindent The paper considers cyber-attacks on a CACC platoon and aims to determine how attacks can be crafted that both are harmful and avoids detection. The attack detection scheme described in section \ref{sec:smo} will be used against a broad range of attacks on the vehicle communication and measurements.

\subsection{Collaborative Adaptive Cruise Control Model}
\noindent Each car in the platoon is modeled as
\begin{equation}\label{eq:car_model}
    \left\{
\begin{aligned}
\dot{\tilde{x}}_i = \left[\begin{matrix}\dot{p}_i\\\dot{v}_i\\\dot{a}_i\end{matrix}\right]&=\underset{A_i}{\underbrace{\left[\begin{matrix}0&1&0\\0&0&1\\0&0&-\frac{1}{\tau_i}\end{matrix}\right]}}\underset{\tilde{x}_i}{\underbrace{\left[\begin{matrix}p_i\\v_i\\a_i\end{matrix}\right]}}+\underset{B_i}{\underbrace{\left[\begin{matrix}0\\0\\\frac{1}{\tau_i}\end{matrix}\right]}}u_i\,,\\
\tilde{y}_i &= \left[\begin{matrix}p_i-p_{i-1}-L_{i}\\v_i-v_{i-1}\\v_i\\a_i\end{matrix}\right]+\tilde{\zeta}_i\,,
 \end{aligned}\right.
\end{equation}
where subscripts $i$ and $i-1$ denote cars $i$ and $i-1$ respectively. Furthermore, $p$, $v$, $a$, $u$, $\tilde{y}$, $\tilde{\zeta}$, $\tau$, and $L$ are, respectively, the position, velocity, acceleration, control input, measurements, measurement noise, engine time constant, and length of the car. It is assumed that the measurement noise $\tilde{\zeta}$ is bounded and zero-mean.

In the considered scenario, as can be seen from equation \eqref{eq:car_model}, each car measures the distance and relative velocity with respect to the preceding car, as well as its own velocity and acceleration. Furthermore, each car receives, via wireless communication, the control input $u_{i-1}$ of the preceding car. Using this information the controller \cite{Ploeg2011a}, is implemented as\footnote{It is important to remark that the proposed estimation and detection scheme works regardless of the used controller} %Recall that the cyber-attack estimation and detection works regardless of the controller.
\begin{equation*}
    \dot{u}_i = -\frac{1}{h_\text{ref}}u_i+k_p e_i + k_d \dot{e}_i + \frac{1}{h_\text{ref}}u_{i-1}
\end{equation*}
where $e_i = \tilde{y}_{i,a,(1)}-(r+h_\text{ref}\tilde{y}_{i,a,(3)})$, $r$ and $h_\text{ref}$ are the desired standstill distance and time headway, and $k_p$ and $k_d$ are gains.

\begin{assumption}\label{ass:sys_uncertainty}
Each car $i$ has perfect knowledge only of its own dynamics, i.e. $\tau_{j}$ is known only for $j = i$. The relation $\hat{\tau}_{j}=r_\tau \tau_{j}$ will be used as an estimate of $\tau_{j}$ if $j\neq i$. Here $r_\tau$ is unknown, but its upper and lower bounds are known.
\end{assumption}

The interaction between two cars in a CACC scenario can be modeled, from the perspective of car $i$, as
\begin{equation}\label{eq:CACC_model}
    \left\{
\begin{aligned}
\hspace{-0.1cm}\left[\begin{matrix}\dot{\tilde{x}}_{i-1}\\\dot{\tilde{x}}_i\end{matrix}\right] &= \underset{\hat{A}}{\underbrace{\left[\begin{matrix}\hat{A}_{i-1}&0\\0&A_i\end{matrix}\right]}}\underset{\tilde{x}}{\underbrace{\left[\begin{matrix}\tilde{x}_{i-1}\\\tilde{x}_i\end{matrix}\right]}}+\underset{\hat{B}}{\underbrace{\left[\begin{matrix}\hat{B}_{i-1}&0\\0&B_i\end{matrix}\right]}}\underset{u}{\underbrace{\left[\begin{matrix}u_{i-1}\\u_i\end{matrix}\right]}}+E\tilde{\eta}\\
 \tilde{y}_i &= 
 C\underset{\tilde{x}}{\underbrace{\left[\begin{matrix}\tilde{x}_{i-1}\\\tilde{x}_i\end{matrix}\right]}}+\underset{\tilde{c}}{\underbrace{\left[\begin{matrix}-L_{i}\\0_{3\times 1}\end{matrix}\right]}}+\tilde{\zeta}_i\,,
 \end{aligned}\right.
\end{equation}
where $C$ can be derived from equation \eqref{eq:car_model}. The model uncertainty is made explicit by using $\hat{A}_{i-1}$ and $\hat{B}_{i-1}$, which denote $A_{i-1}$ and $B_{i-1}$ where $\hat{\tau}_{i-1}$ replaces $\tau_{i-1}$. Furthermore,
\begin{equation*}
    E=\left[\begin{matrix}
0_{2\times 1}\\\frac{1}{\hat{\tau}_{i-1}}\\0_{3\times 1}
\end{matrix}\right]\,; \; \tilde{\eta}=(r_\tau -1)(u_{i-1}-a_{i-1})\,,
\end{equation*}
such that $\hat{A}$, $\hat{B}$ and $E$ are known to car $i$, and $\tilde{\eta}$ represents a bounded unknown uncertainty.

\subsection{Considered Attacker Capabilities}\label{ssec:attack_capabilities}
\noindent The wireless connectivity in the CACC platoon aids performance, but also makes the cars vulnerable to cyber-attacks. In this work, an attacker is considered which can perform \emph{man-in-the-middle} attacks on both the communicated input $u_{i-1}$, and all measurements $\tilde{y}_i$. The aim is to determine what attack can do most harm to the system without being detected.

The attacks are denoted as data injection attacks, such that
\begin{equation*}
    u_{i-1,a} = u_{i-1} + \Delta{u}_{i-1}\,; \tilde{y}_{i,a} = \tilde{y}_i + \Delta \tilde{y}\,,
\end{equation*}
where $\Delta{u}_{i-1}$ and $\Delta \tilde{y}$ are the bounded cyber-attacks, and $u_{i-1,a}$ and $\tilde{y}_{i,a}$ are the attacked signals known to car $i$.

The model can then be rewritten as
\begin{equation}\label{eq:full_CACC_model}
    \left\{
\begin{aligned}
\dot{\tilde{x}} &= \hat{A}\tilde{x}+\hat{B}u_a+E\tilde{\eta}+F\Delta u\,,\\
 \tilde{y}_{i,a} &= C\tilde{x}+\tilde{c}+\Delta \tilde{y}+\tilde{\zeta}_i\,,
 \end{aligned}\right.
\end{equation}
\begin{equation*}
F=\left[\begin{matrix}
0_{2\times 1}\\
-\frac{1}{\hat{\tau}_{i-1}}\\
0_{3\times 1}
\end{matrix}\right]~;~u_a=\left[\begin{matrix}u_{i-1,a}\\u_i\end{matrix}\right]\,,
\end{equation*}
such that $u_a$ and $F$ are known to car $i$.

The considered attack classes are now introduced.
\begin{definition}[\cite{pasqualetti2013attack}]\label{def:stealthy}
\noindent
Consider a non-zero cyber-attack $\Delta$. This cyber-attack is said to be \emph{stealthy}
if and only if $\tilde{y}_{i,a}(\Delta,t)=\tilde{y}_{i,a}(0,t)$ for all $t\in\mathbb{R}_{\geq0}$.
\end{definition}

\begin{definition}[Extended from \cite{Ao2016}]\label{def:quantifiable}
An attack $\Delta$ is said to be \emph{quantifiable} if there exists a smooth function $r_\Delta (\tilde{y}_{i,a},u_{a},t)$ such that $\lim_{t\to\infty} r_\Delta(\tilde{y}_{i,a},u_{a},t) \in \Delta +\left[-\delta, \delta \right]$ for a bounded $\delta$.
\end{definition}

\begin{definition}\label{def:nonstealthy}
An attack is said to be \emph{non-stealthy} if it is not stealthy and not quantifiable.
\end{definition}

\begin{remark}
In \cite{pasqualetti2013attack} a linear observer is proposed, and it is proven that the attacked and healthy residual are indistinguishable under stealthy attacks.
In \cite{Ao2016} a SMO-based residual is used to prove that an attack is quantifiable if the output and state equations are not attacked simultaneously. \end{remark}

\subsection{Extended System for Observer Implementation}\label{ssec:extended_sys}
\noindent In section \ref{sec:smo}, an SMO-based attack detection approach will be presented. The used SMO based approach originally only considers attacks to appear in the state equation \cite{Keijzer2019,Edwards2000-vs}. In \cite{Tan2003b}, also attacks on part of the output equation are considered. To this end, the system is transformed and extended such that all attacks again appear in the state equation only. In this work the extended system from \cite{Tan2003b} is used with the observer and detection threshold from \cite{Keijzer2019}. However, here an even broader class of attacks is considered, where all outputs are subject to attacks. To accommodate this, the output attack will be split up such that one part is added to the state equation, and one part will remain in the output equation.

Extending the approach in \cite{Tan2003b}, the output of system \eqref{eq:full_CACC_model} is transformed using $\left[\begin{matrix}y_{i,1}\\y_{i,2}\end{matrix}\right]=T_y \tilde{y}_{i,a}$, such that $\left[\begin{matrix}\Delta y_{1}\\ \Delta y_{2}\end{matrix}\right]=T_y \Delta \tilde{y}$
and $y_{i,2}\in\mathbb{R}^{h}$ can be added to the state equation. Here $T_y$ is a permutation matrix. With this
\begin{equation*}
    \left\{
\begin{aligned}
\dot{\tilde{x}} &= \hat{A}\tilde{x}+\hat{B}u_a+E\tilde{\eta}+F\Delta u\,,\\
 y_{i,1} &= \mathcal{C}_1\tilde{x}+c_1+\Delta y_1+\zeta_{i,1}\,,\\
 y_{i,2} &= \mathcal{C}_2\tilde{x}+c_2+\Delta y_2+\zeta_{i,2}\,,
 \end{aligned}\right.
\end{equation*}
where$\left[\begin{matrix}\mathcal{C}_1\\\mathcal{C}_2\end{matrix}\right]=T_yC$, $\left[\begin{matrix}c_1\\c_2\end{matrix}\right]=T_y\tilde{c}$, $\left[\begin{matrix}\zeta_{i,1}\\\zeta_{i,2}\end{matrix}\right]=T_y\tilde{\zeta}_i$ and $\Delta y = \left[\begin{matrix}\Delta y_1\\ \Delta y_2\end{matrix}\right]$. Note that $T_y$ and $h$ (dimension of $y_{i,2}$) are design choices that determine which elements of the output are added to the state equation. The choices for $T_y$ and $h$ are constrained by assumptions \ref{ass:polesE} and \ref{ass:polesF},which are used in proves later in the paper. The final design choices are presented in section \ref{sec:design}.

Next, the part of the output $y_{i,2}$ is filtered using $\dot{z}=A_f(z-y_{i,2})$,
such that it can be added to the state equation. Here $A_f\prec 0$ is the filter gain matrix. This allows to write
\begin{equation*}
\left\{
    \begin{aligned}
    \left[\begin{matrix}\dot{\tilde{x}}\\\dot{z}\end{matrix}\right]=&
    \underset{A_e}{\underbrace{\left[\begin{matrix}\hat{A}&0\\-A_f\mathcal{C}_2&A_f\end{matrix}\right]}}
    \underset{x}{\underbrace{\left[\begin{matrix}\tilde{x}\\z\end{matrix}\right]}}
    +\underset{B_e}{\underbrace{\left[\begin{matrix}\hat{B}&0\\0&-A_f\end{matrix}\right]}}
    \underset{u}{\underbrace{\left[\begin{matrix}u_a\\c_2\end{matrix}\right]}}\\
    &\hspace{0.8cm}+\underset{E_e}{\underbrace{\left[\begin{matrix}E&0\\0&-A_f\end{matrix}\right]}}
    \underset{\eta}{\underbrace{\left[\begin{matrix}\tilde{\eta}\\\zeta_{i,2}\end{matrix}\right]}}
    +\underset{F_e}{\underbrace{\left[\begin{matrix}F&0\\0&-A_f\end{matrix}\right]}}
    \underset{\Delta}{\underbrace{\left[\begin{matrix}\Delta u\\\Delta y_2\end{matrix}\right]}}\\
    \underset{y}{\underbrace{\left[\begin{matrix}y_{i,1}\\z\end{matrix}\right]}}=&
    \underset{C_e}{\underbrace{\left[\begin{matrix}\mathcal{C}_1&0\\0&I_h\end{matrix}\right]}}
    \left[\begin{matrix}\tilde{x}\\z\end{matrix}\right]
    +\underset{c}{\underbrace{\left[\begin{matrix}c_1\\0\end{matrix}\right]}}+\underset{D}{\underbrace{\left[\begin{matrix}I_{p-h}\\0\end{matrix}\right]}}\zeta_{i,1}+\underset{H}{\underbrace{\left[\begin{matrix}\mathcal{H}\\0\end{matrix}\right]}}\Delta y_1
    \end{aligned}\right.
\end{equation*}
In this extended system, part of the attack on the output, $\Delta y_2$, appears in the state equation. Therefore, if $\Delta y_1=0$, the attack can be considered within the original framework of the SMO based attack detection approach \cite{Keijzer2019,Edwards2000-vs,Tan2003b}.

The state of the extended system will then be transformed as $\left[\begin{matrix}x_1\\x_2\end{matrix}\right]=Tx$, such that $C_e T^{-1}=\left[\begin{matrix}0\\I_{p}\end{matrix}\right]^\top$. This results in
\begin{equation}\label{eq:SMO_system}
\left\{
    \begin{aligned}
    \left[\begin{matrix}\dot{x}_1\\\dot{x}_2\end{matrix}\right]=&
    \left[\begin{matrix}A_{11}&A_{12}\\A_{21}&A_{22}\end{matrix}\right]\left[\begin{matrix}x_1\\x_2\end{matrix}\right]
    +\left[\begin{matrix}B_1\\B_2\end{matrix}\right]u+\left[\begin{matrix}E_1\\E_2\end{matrix}\right]\eta
    +\left[\begin{matrix}F_1\\F_2\end{matrix}\right]\Delta\\
    y=&x_2
    +c+D\zeta_{i,1}+H\Delta y_1
    \end{aligned}\right.
\end{equation}
where $y\in\mathbb{R}^p$, $x_1\in\mathbb{R}^{n-p}$, $x_2\in\mathbb{R}^p$, $u\in\mathbb{R}^{2+h}$, $\eta\in\mathbb{R}^{1+h}$, $\Delta\in\mathbb{R}^{1+h}$, $\zeta_{i,1}\in\mathbb{R}^{p-h}$, and $\Delta y_{1}\in\mathbb{R}^{p-h}$. In this form the system can be used for the implementation of the SMO-based attack detection approach. Throughout this section assumptions were mentioned, which are summarized below in terms of the variables of system \eqref{eq:SMO_system}.
\begin{assumption}\label{ass:bound}
The cyber-attack $\Delta$, model uncertainty $\eta$, and measurement noise $\zeta_{i,1}$ are bounded as $\lvert\Delta\rvert\leq\bar{\Delta}$, $\lvert\eta\rvert\leq\bar{\eta}$, and $\lvert\zeta_{i,1}\rvert\leq\bar{\zeta}_{i,1}$ by known $\bar{\Delta}$, $\bar{\eta}$, and $\bar{\zeta}_{i,1}$, respectively.
\end{assumption}
\section{Detection \& Estimation Method}\label{sec:smo}
\noindent In this section, the SMO-based cyber-attack estimation and detection method from \cite{Keijzer2019} will be presented. Furthermore, the resulting error dynamics are presented. Lastly, the \emph{Equivalent Output Injection} (EOI) is introduced, which will form the basis of the attack estimation presented in section \ref{sec:est}.

\subsection{Sliding Mode Observer}
The dynamics of the SMO from \cite{Keijzer2019} can be written as
\begin{equation}\label{eq:SMO}
\left\{
    \begin{aligned}
    \left[\begin{matrix}\dot{\hat{x}}_1\\\dot{\hat{x}}_2\end{matrix}\right]=&
    \left[\begin{matrix}A_{11}&A_{12}\\A_{21}&A_{22}\end{matrix}\right]\left[\begin{matrix}\hat{x}_1\\\hat{x}_2\end{matrix}\right]
    +\left[\begin{matrix}B_1\\B_2\end{matrix}\right]u-\left[\begin{matrix}A_{12}\\A_{22}^{-s}\end{matrix}\right]e_y
    +\left[\begin{matrix}0\\\nu\end{matrix}\right]\,,\\
    \hat{y}=&\hat{x}_2+c\,,\\
    \nu =& -\rho \text{sgn}(e_y)\,,
    \end{aligned}\right.
\end{equation}
where $e_y\triangleq\hat{y}-y$, $A_{22}^{-s}=A_{22}-A_{22}^s$, and $\rho\succ 0$ and $A_{22}^s\prec 0$ are diagonal matrices.
When applied to system \eqref{eq:SMO_system}, this results in the error dynamics
\begin{equation}\label{eq:SMO_err_dyn}
    \left\{
    \begin{aligned}
    \left[\begin{matrix}\dot{e}_1\\\dot{e}_2\end{matrix}\right]=&
    \left[\begin{matrix}A_{11}&0\\A_{21}&A_{22}^s\end{matrix}\right]\left[\begin{matrix}e_1\\e_2\end{matrix}\right]-\left[\begin{matrix}E_1\\E_2\end{matrix}\right]\eta\\
    &\hspace{0.3cm}+\left[\begin{matrix}A_{12}\\A_{22}^{-s}\end{matrix}\right](D\zeta_{i,1}+H\Delta y_1)
    -\left[\begin{matrix}F_1\\F_2\end{matrix}\right]\Delta
    +\left[\begin{matrix}0\\\nu\end{matrix}\right]\,,\\
    e_y=&e_2-D\zeta_{i,1}-H\Delta y_1\,,\\
    \end{aligned}\right.
\end{equation}
where $e_1=\hat{x}_1-x_1$ and $e_2=\hat{x}_2-x_2$.

We will now recall a result from \cite{Keijzer2019}, which holds under the following assumption 
\begin{assumption}\label{ass:polesE}
All poles of the pair $(A_{11},E_1)\in \mathbb{C}^-$.
\end{assumption}
\begin{lemma}[\cite{Keijzer2019}]\label{prop:err_bounds}
Consider the error dynamics in system \eqref{eq:SMO_err_dyn}. Define $\underline{e}_1\leq e_1 \leq \bar{e}_1$, $\tilde{e}_1\geq \lvert e_1\rvert$, $\bar{e}_2^0\geq\lvert e_2 \rvert$, and $\underline{\dot{e}}_2\leq\lvert \dot{e}_2 \rvert\leq\bar{\dot{e}}_2$. Furthermore denote the healthy condition, when $\Delta=\Delta y_1=0$, with superscript $^0$. 

If Assumptions \ref{ass:bound} and \ref{ass:polesE} hold and, element-wise, $\text{diag}(\rho)>\lvert A_{21}\rvert \tilde{e}_1+\lvert A_{22}\rvert\bar{\zeta}_{i,1}+\lvert E_2\rvert\bar{\eta}+\lvert F_2\rvert\bar{\Delta}$, then
\begin{equation*}
    \begin{aligned}
    \bar{e}_1^0&=e^{A_{11}(t)}e_1(0) -A_{11}^{-1} (I-e^{A_{11}(t)})(\lvert A_{12}\rvert\bar{\zeta}_{i,1}+\lvert E_1\rvert\bar{\eta})\\
    \underline{e}_1^0&=e^{A_{11}(t)}e_1(0) +A_{11}^{-1} (I-e^{A_{11}(t)})(\lvert A_{12}\rvert\bar{\zeta}_{i,1}+\lvert E_1\rvert\bar{\eta})\\
    \bar{e}_2^0&=\bar{\zeta}_{i,1}\\
    \bar{\dot{e}}^0_2&= \lvert A_{21}\rvert \bar{e}^0_1+(\lvert A_{22}^{-s}\rvert+\lvert A_{22}^s \rvert)\bar{\zeta}_{i,1}+\lvert E_2\rvert\bar{\eta}+ \rho\\
    \underline{\dot{e}}^0_2&=\lvert A_{21}\rvert \underline{e}^0_1-(\lvert A_{22}^{-s}\rvert+\lvert A_{22}^s \rvert)\bar{\zeta}_{i,1}-\lvert E_2\rvert\bar{\eta}+ \rho\\
    &\hspace{-0.3cm}\text{sgn}(\dot{e}_2)=-\text{sgn}(e_y)\hspace{4.5cm}\blacksquare
    \end{aligned}
\end{equation*} 
\end{lemma}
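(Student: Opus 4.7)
The plan is to analyse the two blocks of the error dynamics \eqref{eq:SMO_err_dyn} separately, exploiting the fact that in the healthy case $\Delta=\Delta y_1=0$ the $e_1$ subsystem decouples completely from the switching injection $\nu$, so it can be handled with classical linear-system tools, while the $e_2$ subsystem is treated with a standard sliding-mode reaching argument.

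First I would substitute $\Delta=\Delta y_1=0$ into \eqref{eq:SMO_err_dyn}, reducing the $e_1$ dynamics to $\dot e_1 = A_{11} e_1 + A_{12} D \zeta_{i,1} - E_1 \eta$. Assumption~\ref{ass:polesE} makes $A_{11}$ Hurwitz and therefore invertible, so variation of parameters gives $e_1(t) = e^{A_{11} t} e_1(0) + \int_0^t e^{A_{11}(t-\tau)} \bigl(A_{12} D \zeta_{i,1}(\tau) - E_1 \eta(\tau)\bigr)\,d\tau$. Using the standard identity $\int_0^t e^{A_{11}(t-\tau)}\,d\tau = -A_{11}^{-1}(I-e^{A_{11} t})$ and bounding the integrand element-wise with $|\zeta_{i,1}|\le\bar\zeta_{i,1}$ and $|\eta|\le\bar\eta$, choosing the signs that maximise or minimise each component, yields directly the expressions for $\bar e_1^{\,0}$ and $\underline e_1^{\,0}$.

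For the $e_2$ error the strategy is the classical SMO reachability/equivalent-injection argument. The switching term $\nu=-\rho\,\text{sgn}(e_y)$ has component-wise magnitude $\rho$, whereas the remaining contributions to $\dot e_2$ in the healthy case are bounded element-wise by $|A_{21}|\tilde e_1+|A_{22}^s||e_2|+|A_{22}^{-s}|\bar\zeta_{i,1}+|E_2|\bar\eta+|F_2|\bar\Delta$. Using $|A_{22}|\le|A_{22}^s|+|A_{22}^{-s}|$ and the self-consistent bound $|e_2|\le\bar\zeta_{i,1}$ on the surface, the stated element-wise hypothesis on $\rho$ dominates this quantity, which forces $\text{sgn}(\dot e_2)=\text{sgn}(\nu)=-\text{sgn}(e_y)$ and drives $e_y$ to zero component-wise. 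Since $e_y = e_2 - D\zeta_{i,1}$, reaching the surface yields $|e_2|\le\bar\zeta_{i,1}$, i.e.\ $\bar e_2^{\,0}=\bar\zeta_{i,1}$. The two envelopes on $|\dot e_2|$ then follow by a direct triangle-inequality bound on the expression for $\dot e_2$, plugging in $\bar e_1^{\,0}$ or $\underline e_1^{\,0}$, the just-obtained $\bar e_2^{\,0}$, and $|\nu|=\rho$.

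The step that requires the most care is the reachability argument in the presence of measurement noise: because $e_y$ depends directly on $\zeta_{i,1}$ (which need not be differentiable in time), the surface $e_y=0$ is reached only in an ideal Filippov sense, and $\dot e_y$ is not pointwise defined. The cleanest remedy is to invoke the equivalent-output-injection principle, showing that the strict element-wise reaching inequality on $\rho$ enforces the sign relation almost everywhere, which is sufficient for all subsequent bounds. As the lemma is attributed to \cite{Keijzer2019}, I would at this point cite that reference to import the full Filippov treatment and merely adapt the notation to that of system \eqref{eq:SMO_system}.
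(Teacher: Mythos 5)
This lemma is stated in the paper as a recalled result from \cite{Keijzer2019}; the paper itself contains no proof of it, so there is nothing in-text to compare your argument against line by line. On its own merits, your sketch follows the natural (and almost certainly the intended) route: variation of parameters for the decoupled $e_1$ subsystem under Assumption \ref{ass:polesE}, a reaching/sliding argument for $e_2$ giving $e_y\to 0$ and hence $\lvert e_2\rvert=\lvert D\zeta_{i,1}\rvert\leq\bar{\zeta}_{i,1}$, and triangle-inequality envelopes for $\dot{e}_2$ with $\lvert\nu\rvert=\rho$. Two points deserve tightening. First, replacing $\int_0^t e^{A_{11}(t-\tau)}\,d\tau$ acting on the worst-case signs by $-A_{11}^{-1}(I-e^{A_{11}t})(\lvert A_{12}\rvert\bar{\zeta}_{i,1}+\lvert E_1\rvert\bar{\eta})$ is only a valid element-wise bound if $e^{A_{11}t}$ is entrywise nonnegative (e.g.\ $A_{11}$ Metzler); your ``choose the signs that maximise each component'' step silently assumes this, though the lemma statement itself commits to the same form, so this is a property of the specific $A_{11}$ here rather than an error you introduced. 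Second, your perturbation bound for the reaching condition is written in terms of $\lvert A_{22}^s\rvert\lvert e_2\rvert+\lvert A_{22}^{-s}\rvert\bar{\zeta}_{i,1}$, whereas the hypothesis on $\rho$ only provides $\lvert A_{22}\rvert\bar{\zeta}_{i,1}\leq(\lvert A_{22}^s\rvert+\lvert A_{22}^{-s}\rvert)\bar{\zeta}_{i,1}$; the clean way to close this is to rewrite $\dot{e}_2$ in the $e_y$ coordinate, so that the stabilising term $A_{22}^s e_y$ need not be dominated and the remaining noise contribution is exactly $A_{22}D\zeta_{i,1}$, matching the stated condition on $\rho$. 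With those repairs the argument is sound, and deferring the Filippov technicalities to \cite{Keijzer2019} is appropriate for a result the paper itself only imports.
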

\subsection{Equivalent Output Injection}
\noindent The EOI, which will be used for cyber-attack estimation and detection is defined as
\begin{equation}\label{eq:EOI}
    \dot{\nu}_{\text{fil}} = A_\nu (\nu_{\text{fil}}-\nu)\,,
\end{equation}
where $A_\nu\prec0$ is the filter gain matrix.

\subsection{Detection Threshold}
\noindent The threshold from \cite{Keijzer2019} is based on the combination of two types of behaviour, which together allow the EOI to attain its worst-case healthy value. In general, the worst-case behaviour is obtained if the duration that $\nu>0$ is maximal compared to the duration that $\nu<0$. This behaviour has to be attained while adhering to the bounds on the observer error dynamics as presented in Proposition \ref{prop:err_bounds}.

A detailed design of the threshold can be found in \cite{Keijzer2019}, while the resulting threshold will be presented below.
\begin{theorem}[\cite{Keijzer2019}]\label{thm:threshold}

\noindent Considering system \eqref{eq:SMO_system}, the observer in equation \eqref{eq:SMO} with $\rho$ as in Proposition \ref{prop:err_bounds}, and the EOI as defined in equation \eqref{eq:EOI}. If Assumptions \ref{ass:bound} and \ref{ass:polesE} hold, $\bar{\nu}_\text{fil}$, as defined in equation \eqref{eq:threshold}, bounds the healthy behaviour of the EOI. $\bar{\nu}_\text{fil}$ can therefore act as a cyber-attack detection threshold which guarantees to have no false detection.

Define the sequence of instants $\left\{t_{k}\right\}$ as the times at which $\nu$ changes sign. Furthermore, assume that $\nu>0$ during each period $\left[t_{2k}~~t_{2k+1}\right]$ and define $t_{k-}=t_{2k}-t_{2k-1}$. Then
\begin{equation}\label{eq:threshold}
    \bar{\nu}_\text{fil}(t_{2k}) = e^{A_{\nu}\bar{t}}\bar{\nu}_{\text{fil},0}(t_{2k})+(1-e^{A_{\nu}\bar{t}})\rho
\end{equation}
where
\begin{equation*}
    \begin{aligned}
    \bar{\nu}_{\text{fil},0}(t_{2k}) =&e^{A_{\nu}\tilde{t}_k}\bar{\nu}_{\text{fil}}(t_{2k-2})+(1-2e^{A_{\nu}t_{k+}}+e^{A_{\nu}\tilde{t}_k})\rho \\
    \bar{t} =&\frac{2\bar{e}_2^0}{\bar{\dot{e}}^0_2} ~~;~~
    t_{k+} = \frac{\bar{\dot{e}}^0_2}{\underline{\dot{e}}^0_2}t_{k-} ~~;~~ \tilde{t}_k=t_{k-}+t_{k+}\\
    \end{aligned}
\end{equation*}
The obtained threshold is valid for the period $\left[t_{2k}~~t_{2k+2}\right]$.\hfill $\blacksquare$
\end{theorem}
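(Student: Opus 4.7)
The plan is to bound $\nu_\text{fil}(t)$ for $t\in[t_{2k},t_{2k+2}]$ in the healthy regime ($\Delta=\Delta y_1=0$) by identifying the switching pattern of $\nu\in\{-\rho,+\rho\}$ that maximises the filter output. Equation~\eqref{eq:EOI} is a stable first-order linear filter ($A_\nu\prec 0$) driven by $\nu$ and, on any interval where $\nu$ is constant at some $\nu^{\star}\in\{-\rho,+\rho\}$, admits the closed-form solution $\nu_\text{fil}(t+\Delta t)=e^{A_\nu\Delta t}\nu_\text{fil}(t)+(I-e^{A_\nu\Delta t})\nu^{\star}$. Consequently, a worst-case upper bound is attained whenever the positive phases of $\nu$ last as long as possible and the negative phases are as short as possible.

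I would then use Lemma~\ref{prop:err_bounds} to translate the observer's sliding behaviour into constraints on consecutive phase durations $t_{k-}$ and $t_{k+}$. Because $\text{sgn}(\dot{e}_2)=-\text{sgn}(e_y)=\text{sgn}(\nu)$, while $e_2$ is confined to $[-\bar{e}_2^0,\bar{e}_2^0]$ with speed bounded in $[\underline{\dot{e}}_2^0,\bar{\dot{e}}_2^0]$, the longest positive phase that can follow a negative phase of length $t_{k-}$ arises when $e_2$ descends at the maximum rate $\bar{\dot{e}}_2^0$ and is retraced at the minimum rate $\underline{\dot{e}}_2^0$, giving exactly $t_{k+}=(\bar{\dot{e}}_2^0/\underline{\dot{e}}_2^0)\,t_{k-}$. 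Analogously, $\bar{t}=2\bar{e}_2^0/\bar{\dot{e}}_2^0$ is the shortest time $e_2$ can take to cross its entire admissible range at maximum rate, and it is precisely this segment on which the peak of $\nu_\text{fil}$ over $[t_{2k},t_{2k+2}]$ must be evaluated.

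With these building blocks in place, I would evaluate the filter over one worst-case cycle $[t_{2k-2},t_{2k}]$ by composing the closed-form update first through the positive segment and then through the negative segment, obtaining the auxiliary $\bar{\nu}_{\text{fil},0}(t_{2k})$ as a function of $\bar{\nu}_\text{fil}(t_{2k-2})$ and the durations $t_{k-},t_{k+}$. A further filter update of length $\bar{t}$ with $\nu=\rho$ then yields the bound $\bar{\nu}_\text{fil}(t_{2k})$ stated in equation~\eqref{eq:threshold}. An induction on $k$ propagates the bound forward in time, with $\bar{\nu}_\text{fil}(t_{2k-2})$ serving as the inductive hypothesis for the starting value of the filter.

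The main obstacle is arguing rigorously that the constructed pattern is an upper bound for \emph{every} admissible healthy trajectory, since the pair $(t_{k-},t_{k+})$ is not chosen freely: they are linked by the past trajectory of $e_2$, and the initial filter value at $t_{2k-2}$ itself depends on all earlier cycles. The argument I expect to carry out is a monotonicity analysis showing that $\nu_\text{fil}$ depends monotonically on (i) the length of each positive phase, (ii) the length of each negative phase (with the sign reversed), and (iii) the filter initial value, so that replacing the actual trajectory's phase durations by their worst-case admissible counterparts and the actual filter value at $t_{2k-2}$ by $\bar{\nu}_\text{fil}(t_{2k-2})$ can only increase $\nu_\text{fil}$. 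Once this monotonicity-plus-induction step is secured, the no-false-detection property is immediate: any $\nu_\text{fil}(t)$ exceeding $\bar{\nu}_\text{fil}(t_{2k})$ on $[t_{2k},t_{2k+2}]$ is incompatible with $\Delta=\Delta y_1=0$ and thus certifies an attack.
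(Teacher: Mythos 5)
The paper does not actually prove this theorem: it is imported verbatim from \cite{Keijzer2019}, and the text surrounding it offers only a two-sentence heuristic (the worst case is obtained when the duration with $\nu>0$ is maximal relative to the duration with $\nu<0$, subject to the error bounds of Lemma~\ref{prop:err_bounds}). Your plan is consistent with that heuristic and correctly identifies all the ingredients one would need --- the closed-form update of the stable first-order filter on intervals of constant $\nu$, the translation of the $e_2$ bounds into the ratios $t_{k+}=(\bar{\dot{e}}_2^0/\underline{\dot{e}}_2^0)\,t_{k-}$ and $\bar{t}=2\bar{e}_2^0/\bar{\dot{e}}_2^0$, the recursion in $k$, and the monotonicity argument needed to dominate every admissible trajectory. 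So as a reconstruction of the cited proof strategy it is on target.

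Two concrete issues remain. First, the composition order you state (positive segment, then negative segment) does not reproduce the published formula: propagating $\bar{\nu}_\text{fil}(t_{2k-2})$ through a positive phase of length $t$ followed by a negative phase of length $s$ gives a correction term $(2e^{A_\nu s}-e^{A_\nu(t+s)}-1)\rho$, whereas the stated $\bar{\nu}_{\text{fil},0}$ carries $(1-2e^{A_\nu t_{k+}}+e^{A_\nu \tilde{t}_k})\rho$, which is what you get by applying the \emph{negative} phase of length $t_{k-}$ first and then the worst-case positive phase of length $t_{k+}$. You need to fix the ordering (and justify why the measured quantity is $t_{k-}$ while the positive-phase length is replaced by its worst-case bound $t_{k+}$) before the algebra closes. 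Second, you explicitly defer the central difficulty --- proving that the constructed switching pattern upper-bounds every trajectory compatible with Lemma~\ref{prop:err_bounds}, including the coupling between successive phase durations through the actual path of $e_2$ and the dependence of the filter state on the entire history. As written this is a statement of intent, not a proof; the monotonicity-plus-induction step is exactly where the substance of the argument in \cite{Keijzer2019} lies, and your proposal does not yet supply it.
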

\section{EOI based Attack Estimation}\label{sec:est}
\noindent In this section two proofs are presented to show the EOI's ability to estimate the cyber-attacks. It will be shown that not all considered cyber-attacks are quantifiable by the EOI. However, a set of sufficient conditions is proposed for which the cyber-attacks are quantifiable. This subset of attacks can be treated using the existing framework in \cite{Keijzer2019,Tan2003b}.

\begin{lemma}\label{lem:attack_est}
Consider a noiseless version of system \eqref{eq:SMO_system}, where $\zeta_{i,1}=0$, the observer in equation \eqref{eq:SMO} where $\text{diag}(\rho)>\lvert A_{21}\rvert \tilde{e}_1+\lvert A_{22}\rvert\bar{\zeta}_{i,1}+\lvert E_2\rvert\bar{\eta}+\lvert F_2\rvert\bar{\Delta}+\lvert H\rvert \Delta \bar{\dot{y}}_1 + \lvert A_{22} H\rvert \Delta \bar{y}_1$, and the EOI as defined in equation \eqref{eq:EOI}. Furthermore, define $\Delta \bar{y}_1\geq \lvert\Delta y_1\rvert$ and $\Delta \bar{\dot{y}}_1\geq \lvert\Delta \dot{y}_1\rvert$. If Assumption \ref{ass:bound} holds, $$\lim_{t\to\infty}\nu_{\text{fil}}\in -A_{21}e_1-F_2\Delta+H\Delta \dot{y}_1-A_{22}H\Delta y_1+\lvert E_2\rvert \left[-\bar{\eta},\bar{\eta}\right]$$
\end{lemma}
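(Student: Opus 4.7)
The plan is to combine the reachability of the sliding mode with the standard equivalent-output-injection argument. First I would show that the strengthened lower bound assumed here on $\text{diag}(\rho)$ is exactly what is needed to make $e_y=e_2-H\Delta y_1$ (recall $\zeta_{i,1}=0$) attain the sliding manifold $\{e_y=0\}$ in finite time. Starting from the reaching argument of Lemma~\ref{prop:err_bounds}, the only new contributions to $\dot e_y$ compared to the healthy case come from the $A_{22}^{-s}H\Delta y_1$ term in $\dot e_2$ and the explicit $H\Delta\dot y_1$ produced by differentiating the $-H\Delta y_1$ in $e_y$. Bounding these by $\lvert A_{22}H\rvert\Delta\bar y_1$ and $\lvert H\rvert\Delta\bar{\dot y}_1$ respectively, and absorbing them into $\rho$, yields $e_y^\top\dot e_y<0$ channel-wise and hence finite-time convergence to the sliding surface.

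Once a sliding motion is established, I would invoke Utkin's equivalent-injection principle: along the surface, the averaged value of the discontinuous switching signal $\nu$ coincides with the unique $\nu_\text{eq}$ that algebraically enforces $\dot e_y=0$. On the surface, $e_2=H\Delta y_1$ and, in the averaged sense, $\dot e_2=H\Delta\dot y_1$. Substituting these two identities into the second row of \eqref{eq:SMO_err_dyn} and using the defining identity $A_{22}^s+A_{22}^{-s}=A_{22}$ collapses the $A_{22}^s e_2$ and $A_{22}^{-s}H\Delta y_1$ contributions into the single term $A_{22}H\Delta y_1$; solving the resulting algebraic equation for $\nu$ produces the claimed expression, modulo the residual unknown term in $E_2\eta$.

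The EOI filter in \eqref{eq:EOI} is, thanks to $A_\nu\prec 0$, a stable linear low-pass extractor, and it is precisely this filter that realises the equivalent-injection projection: as $t\to\infty$ its own transient decays and its output tracks $\nu_\text{eq}$. I would finish by applying Assumption~\ref{ass:bound} to absorb the unknown but bounded $E_2\eta$ into the interval $\lvert E_2\rvert[-\bar\eta,\bar\eta]$, delivering the stated inclusion. The main technical obstacle is the rigorous justification that $\nu_\text{fil}\to\nu_\text{eq}$ rather than merely oscillating about it; this is classical in the SMO literature and holds whenever the filter bandwidth dictated by $A_\nu$ is well below the effective chattering rate of $\nu$, an assumption already inherited from the framework of \cite{Keijzer2019,Edwards2000-vs,Tan2003b}, and which is compatible here with the smoothness of $\Delta y_1$ implicit in the boundedness of $\Delta\dot y_1$.
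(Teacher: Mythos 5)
Your proposal follows essentially the same route as the paper: a Lyapunov/reachability argument on $V=\tfrac{1}{2}e_y^\top e_y$ using the strengthened bound on $\rho$ (which is exactly where the extra $\lvert H\rvert \Delta\bar{\dot{y}}_1$ and $\lvert A_{22}H\rvert\Delta\bar{y}_1$ terms enter, via $\dot e_y=\dot e_2-H\Delta\dot y_1$ and the collapse $A_{22}^s+A_{22}^{-s}=A_{22}$), followed by solving $\dot e_y=0$ on the sliding surface for the equivalent injection and letting the stable filter \eqref{eq:EOI} track it. The only cosmetic differences are that the paper states convergence of $e_y$ asymptotically rather than in finite time and simply asserts $\nu_{\text{fil}}\to\nu$, whereas you flag the filter-bandwidth caveat explicitly; the substance is identical.
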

\begin{proof}
Consider the candidate Lyapunov function $V=\frac{1}{2} e_y^\top e_y$. Using the lower bound on $\rho$ provided, and $\dot{e}_y=\dot{e}_2-H\Delta \dot{y}_1$ (eq. \eqref{eq:SMO_err_dyn}) it can be proven that if Assumptions \ref{ass:bound} holds,
\begin{equation*}
    \begin{aligned}
        \dot{V} =& e_y^\top (A_{21} e_1 + A_{22}^s e_y  -E_2 \eta - F_2 \Delta -H\Delta \dot{y}_1 + \\&A_{22} H\Delta y_1 -\rho \text{sgn}(e_y))\\
        \dot{V} < &e_y^\top A_{22}^s e_y\,.
    \end{aligned}
\end{equation*}
This proves that $\lim_{t\to\infty} e_y= 0$. Therefore, from equation \eqref{eq:SMO_err_dyn}
$$\lim_{t\to\infty}\nu\in -A_{21}e_1-F_2\Delta+H\Delta \dot{y}_1-A_{22}H\Delta y_1+\lvert E_2\rvert \left[-\bar{\eta},\bar{\eta}\right].$$
Furthermore, as $\nu_{\text{fil}}\to \nu$ asymptotically, also 
$$\lim_{t\to\infty}\nu_{\text{fil}} \in -A_{21}e_1-F_2\Delta+H\Delta \dot{y}_1-A_{22}H\Delta y_1+\lvert E_2\rvert \left[-\bar{\eta},\bar{\eta}\right].$$
This proves the lemma.
\end{proof}
From Lemma \ref{lem:attack_est} it is clear that the EOI is in general affected by the cyber-attack, and will become non-zero if the attack is not carefully designed by the malicious agent. Furthermore, it can be seen that the EOI is affected by both $\Delta$ and $\Delta y_1$, such that in general they cannot be separately estimated.
\subsection{Sufficient Conditions for Quantifiable Attacks}
\noindent Below, additional assumptions are presented under which it will be proven that the attacks are quantifiable. These attacks can be treated using the existing framework in \cite{Keijzer2019,Tan2003b}.
\begin{assumption}\label{ass:noH}
There are no cyber-attacks affecting directly the output, i.e. $\Delta y_1=0$
\end{assumption}
\begin{assumption}\label{ass:polesF}
All poles of the pair $(A_{11},F_1)\in \mathbb{C}^-$
\end{assumption}
\begin{assumption}\label{ass:rank}
$\text{rank}(A_{21}A_{11}^{\dagger}F_1-F_2)=\text{rank}(F_e)=1+h$
\end{assumption}
\begin{lemma}\label{lem:equivalent}
Assumption \ref{ass:rank} holds iff that $p\geq 1+h$, i.e.
the number of cyber-attacks introduced in the state equation is at most equal to the number of outputs.
\end{lemma}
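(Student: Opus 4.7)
The plan is to prove the equivalence as two separate implications, since one direction is a trivial dimension count while the other requires the structural content of Assumption~\ref{ass:polesF} and the extended-system construction.

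First I would handle the implication Assumption~\ref{ass:rank}~$\Rightarrow$~$p\geq 1+h$. By the dimensions listed right below system~\eqref{eq:SMO_system}, $A_{21}\in\mathbb{R}^{p\times(n-p)}$ and $F_1\in\mathbb{R}^{(n-p)\times(1+h)}$, so $A_{21}A_{11}^{\dagger}F_1-F_2\in\mathbb{R}^{p\times(1+h)}$. The rank of any such matrix is bounded above by $\min(p,1+h)$, so if it equals $1+h$ we must have $p\geq 1+h$.

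For the converse, I would first use Assumption~\ref{ass:polesF} to conclude that $A_{11}$ has all eigenvalues in $\mathbb{C}^-$, so that $A_{11}$ is invertible and $A_{11}^{\dagger}=A_{11}^{-1}$. Applying the Schur complement to $M\triangleq\left[\begin{smallmatrix}A_{11}&F_1\\A_{21}&F_2\end{smallmatrix}\right]$ yields $\text{rank}(M)=(n-p)+\text{rank}(F_2-A_{21}A_{11}^{-1}F_1)$, and since negation does not affect rank the second summand equals $\text{rank}(A_{21}A_{11}^{-1}F_1-F_2)$. I would then tie $\text{rank}(M)$ back to $\text{rank}(F_e)=1+h$ using the invertibility of $T$: writing $M = T\,[A_e J_1 \mid F_e]$, where $J_1$ denotes the first $n-p$ columns of $T^{-1}$ (which form a basis of $\ker(C_e)$ since $C_e T^{-1}=[0,I_p]$), it remains to show that $[A_e J_1\mid F_e]$ attains the maximum possible column rank $(n-p)+(1+h)$, the upper bound being exactly the dimensional slack provided by $p\geq 1+h$.

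The hard part will be this last rank identity. It decomposes into two subclaims: (i) $A_e$ restricted to $\ker(C_e)$ is injective, so that $\text{rank}(A_e J_1)=n-p$; and (ii) $\text{range}(A_e J_1)\cap\text{range}(F_e)=\{0\}$. Both will be established from the block-triangular form of $A_e$, the block-diagonal form of $F_e$ with its invertible $-A_f$ component, and the observation that any $(\tilde{x},z)\in\ker(C_e)$ satisfies $z=0$ and $\tilde{x}\in\ker(\mathcal{C}_1)$. Verifying (i) is delicate because the plant matrix $\hat{A}$ has eigenvalues at the origin; the $z$-channel introduced by the $-A_f$ extension is what supplies the missing directions, and this is precisely where the bound $p\geq 1+h$ is exploited. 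Once (i) and (ii) are in hand, the Schur-complement identity delivers $\text{rank}(A_{21}A_{11}^{-1}F_1-F_2)=1+h$, completing the proof.
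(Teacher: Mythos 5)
Your necessity direction (Assumption~\ref{ass:rank} $\Rightarrow p\geq 1+h$) is exactly the paper's argument: $A_{21}A_{11}^{\dagger}F_1-F_2$ is a $p\times(1+h)$ matrix, so its rank is at most $\min(p,1+h)$. Be aware, though, that this dimension count is the \emph{entire} content of the paper's proof: its ``if'' and ``only if'' paragraphs are the same implication stated once directly and once contrapositively, so the paper never actually establishes that $p\geq 1+h$ forces the rank condition to hold. You correctly sensed that a genuine converse requires more, and your reduction is sound as far as it goes: with $A_{11}$ invertible, the Schur complement gives $\mathrm{rank}\left[\begin{smallmatrix}A_{11}&F_1\\A_{21}&F_2\end{smallmatrix}\right]=(n-p)+\mathrm{rank}(A_{21}A_{11}^{-1}F_1-F_2)$, and the left-hand side equals $\mathrm{rank}\,[A_eJ_1\mid F_e]$ because $T$ is invertible, $TA_eJ_1=\left[\begin{smallmatrix}A_{11}\\A_{21}\end{smallmatrix}\right]$ and $TF_e=\left[\begin{smallmatrix}F_1\\F_2\end{smallmatrix}\right]$.

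The gap is in the final step. Your subclaims (i) and (ii) --- injectivity of $A_e$ on $\ker(C_e)$ and trivial intersection of $\mathrm{range}(A_eJ_1)$ with $\mathrm{range}(F_e)$ --- are not lemmas on the way to the result; they are a restatement of the result, and they cannot be deduced from the inequality $p\geq 1+h$ together with the block structure alone. Full column rank of $A_{21}A_{11}^{\dagger}F_1-F_2$ is a zero-frequency-gain condition on the triple $(A_e,F_e,C_e)$ (no invariant zero at the origin with respect to the attack channel $\Delta$), and whether it holds depends on the numerical entries $\hat{\tau}_{i-1}$, $\tau_i$, $A_f$ and on which outputs $T_y$ routes into $y_{i,2}$ --- not merely on dimensions. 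The inequality $p\geq 1+h$ does not preclude $\ker(A_{21}A_{11}^{\dagger}F_1-F_2)$ from being nontrivial (for arbitrary matrices of these sizes, any common null vector of $F_1$ and $F_2$ kills the claim), so your remark that ``this is precisely where the bound $p\geq 1+h$ is exploited'' cannot be made rigorous: the bound is necessary but not sufficient. As your proposal stands, the converse direction is open --- as, in fairness, it also is in the paper, which silently treats the dimension count as if it were an equivalence; closing it honestly requires a system-specific verification for the particular $T_y$ and $h$ chosen in Section~\ref{sec:design}.
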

\begin{proof}
\textbf{if:} $(A_{21}A_{11}^{\dagger}F_1-F_2)$ is a $p\times (1+h)$ matrix, which means its rank is at most $\min(p,1+h)$. Therefore, if Assumption \ref{ass:rank} holds, $p\geq 1+h$. \textbf{only if:} If $p<1+h$, then the rank of $(A_{21}A_{11}^{\dagger}F_1-F_2)$ is at most $p$ so Assumption \ref{ass:rank} cannot hold.
\end{proof}
\begin{remark}
From lemma \ref{lem:equivalent} it can be concluded that not all considered attacks can be added to the state equation. Doing this would result in $h=p$, for which $p \ngeq 1+h$.
\end{remark}
\begin{theorem}\label{thm:attack_est}
Consider a noiseless system \eqref{eq:SMO_system}, where $\zeta_{i,1}=0$, the observer in equation \eqref{eq:SMO} with $\rho$ as in proposition \ref{prop:err_bounds}, and the EOI as in equation \eqref{eq:EOI}. If Assumptions \ref{ass:bound}-\ref{ass:rank} hold,
\begin{equation*}
    \begin{aligned}
    \lim_{t\to\infty}&(A_{21}A_{11}^\dagger F_1-F_2)^\dagger\nu_{\text{fil}}- \Delta\in\\ &(A_{21}A_{11}^\dagger F_1-F_2)^\dagger( A_{21}A_{11}^\dagger \lvert E_1\rvert+\lvert E_2\rvert) \left[ -\bar{\eta},~\bar{\eta}\right]
    \end{aligned}
\end{equation*}
which in turn implies the attack is quantifiable.
\end{theorem}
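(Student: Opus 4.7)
The plan is to chain Lemma \ref{lem:attack_est} with the dynamics of the unobserved reduced-order error $e_1$ taken from the top row of \eqref{eq:SMO_err_dyn}, and then invert the resulting linear map from $\Delta$ to $\nu_{\text{fil}}$ using the left-pseudoinverse guaranteed by Assumption \ref{ass:rank}. First I would apply Lemma \ref{lem:attack_est}: Assumption \ref{ass:noH} removes the $H\Delta\dot{y}_1$ and $A_{22}H\Delta y_1$ contributions, so the lemma collapses to $\lim_{t\to\infty}\nu_{\text{fil}} \in -A_{21}e_1 - F_2\Delta + \lvert E_2\rvert[-\bar{\eta},\bar{\eta}]$, and the only remaining task is to write $e_1$ in terms of $\Delta$ and $\eta$.

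Second, I would read off the $e_1$ equation from \eqref{eq:SMO_err_dyn}: under the noiseless, $\Delta y_1=0$ setting it reduces to $\dot{e}_1 = A_{11}e_1 - E_1\eta - F_1\Delta$. Assumption \ref{ass:polesF}, reinforcing Assumption \ref{ass:polesE}, makes $A_{11}$ Hurwitz, so the free response decays and the forced response settles on the quasi-static value $e_1\to A_{11}^\dagger(E_1\eta + F_1\Delta)$. Plugging this back into the expression above and using Assumption \ref{ass:bound} to bound the $\eta$-dependent remainder yields a set-containment of the form $\nu_{\text{fil}} \in (A_{21}A_{11}^\dagger F_1 - F_2)\Delta + (A_{21}A_{11}^\dagger\lvert E_1\rvert + \lvert E_2\rvert)[-\bar{\eta},\bar{\eta}]$ after collecting terms.

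Third, Assumption \ref{ass:rank} together with Lemma \ref{lem:equivalent} certifies that the $p\times(1+h)$ matrix $A_{21}A_{11}^\dagger F_1 - F_2$ has full column rank, so its Moore--Penrose pseudoinverse is a genuine left-inverse. Left-multiplying the above set-membership by this pseudoinverse isolates $\Delta$ and returns precisely the bound claimed in the theorem; quantifiability in the sense of Definition \ref{def:quantifiable} then follows by reading off the smooth function $r_\Delta = (A_{21}A_{11}^\dagger F_1 - F_2)^\dagger \nu_{\text{fil}}$, which depends on $\tilde{y}_{i,a}$ and $u_a$ through the observer \eqref{eq:SMO} and the filter \eqref{eq:EOI}. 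The main obstacle is the quasi-static replacement of $e_1$: for constant inputs it is exactly the equilibrium of a Hurwitz LTI system, but for time-varying $\Delta$ and $\eta$ the convolution against $e^{A_{11}t}$ leaves a non-trivial dynamic mismatch. I would handle this either by arguing that $\Delta$ and $\eta$ vary slowly relative to the spectrum of $A_{11}$ so the mismatch vanishes asymptotically, or by enlarging the right-hand uncertainty interval to absorb the residual transient into the admissible slack $\delta$ required by Definition \ref{def:quantifiable}.
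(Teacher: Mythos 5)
Your proposal follows essentially the same route as the paper: apply Lemma \ref{lem:attack_est} under Assumption \ref{ass:noH}, substitute the asymptotic value of $e_1$ obtained from its dynamics in \eqref{eq:SMO_err_dyn} under Assumptions \ref{ass:polesE} and \ref{ass:polesF} (the paper invokes the Final Value Theorem where you use a quasi-static equilibrium argument and explicitly flag the time-varying-input caveat), and then left-invert $A_{21}A_{11}^\dagger F_1-F_2$ via Assumption \ref{ass:rank} to isolate $\Delta$ and read off $r_\Delta$ per Definition \ref{def:quantifiable}. The argument is correct and matches the paper's proof.
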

\begin{proof}
From Lemma \ref{lem:attack_est} it can be inferred that if assumptions \ref{ass:bound} and \ref{ass:noH} hold, $\nu_{\text{fil}}\to -A_{21}e_1-F_2\Delta+\lvert E_2\rvert \left[-\bar{\eta},\bar{\eta}\right]$ asymptotically. Then, consider the dynamics of $e_1$ in equation \eqref{eq:SMO_err_dyn}. If and only if Assumptions \ref{ass:polesE} and \ref{ass:polesF} hold, the Final Value theorem can be applied to prove that $$\lim_{t\to\infty}e_1\in-A_{11}^\dagger (\lvert E_1 \rvert \left[ -\bar{\eta},~\bar{\eta}\right]+ F_1\Delta)\,.$$ Therefore, $$\lim_{t\to\infty}\nu_{\text{fil}}\in (A_{21}A_{11}^\dagger F_1-F_2)\Delta+(A_{21}A_{11}^\dagger \lvert E_1 \rvert+\lvert E_2\rvert) \left[ -\bar{\eta},~\bar{\eta}\right]\,.$$ Furthermore, if and only if Assumption \ref{ass:rank} holds
\begin{equation*}
    \begin{aligned}
    \lim_{t\to\infty}&(A_{21}A_{11}^\dagger F_1-F_2)^\dagger\nu_{\text{fil}}- \Delta\in\\ &(A_{21}A_{11}^\dagger F_1-F_2)^\dagger( A_{21}A_{11}^\dagger \lvert E_1\rvert+\lvert E_2\rvert) \left[ -\bar{\eta},~\bar{\eta}\right]
    \end{aligned}
\end{equation*}
The proof that this implies the attack is quantifiable can be derived directly from definition \ref{def:quantifiable}.
\end{proof}
\section{Detector Design}\label{sec:design}
\noindent The SMO-based detection scheme presented in Section \ref{sec:smo} has estimation and detection properties which are dependent on the assumptions made. By Lemma \ref{lem:attack_est} and Theorem \ref{thm:threshold}, the observer and threshold exist if Assumptions \ref{ass:bound} and \ref{ass:polesE} hold. In this section the class of observer designs for which these assumptions hold, and thus the observer and threshold exist, are identified. Furthermore, one design is chosen to be used for the analysis in the remainder of the paper. The design is chosen with the aim to make the largest possible class of attacks quantifiable, i.e. such that assumptions \ref{ass:noH}-\ref{ass:rank} hold.

\begin{theorem}\label{thm:obsdesign}
The observer \eqref{eq:SMO} and threshold \eqref{eq:threshold} exist iff $\Delta y_1$ includes the  relative velocity measurement attack $\Delta \tilde{y}_{i,(2)}$.
\end{theorem}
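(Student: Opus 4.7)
The plan is to reduce the theorem to an equivalence between Assumption~\ref{ass:polesE} and the condition $\Delta\tilde{y}_{i,(2)}\in\Delta y_1$: by Theorem~\ref{thm:threshold} the observer \eqref{eq:SMO} and threshold \eqref{eq:threshold} exist whenever Assumptions~\ref{ass:bound} and~\ref{ass:polesE} hold, and Assumption~\ref{ass:bound} is a standing hypothesis on the model. The equivalence is then established by a structural analysis of the pair $(A_{11},E_1)$ as the design choices $T_y$ and $h$ vary.

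The key structural observation is that $\tilde{\eta}$ enters $\hat{A}$ exclusively through the $\dot{a}_{i-1}$ equation via $E$, and from there it propagates along the integrator chain $\dot{v}_{i-1}=a_{i-1}$ and $\dot{p}_{i-1}=v_{i-1}$. Whether the $v_{i-1}$ step of this chain contributes an $E_1$-reachable mode of $A_{11}$ at the origin is determined by whether $v_{i-1}$ admits a nontrivial component in the $x_2$ partition --- which the sliding injection drives to zero --- or sits purely in $x_1$. Since $v_{i-1}=v_i-v_\text{rel}$ and $v_i$ is always in the output of \eqref{eq:SMO_system} (directly if $v_i\in y_{i,1}$, or through $z$ if $v_i\in y_{i,2}$), the question reduces exactly to whether $v_\text{rel}$ belongs to $y_{i,1}$: in that case $v_\text{rel}$ is directly in $x_2$ and $v_{i-1}$ inherits a nonzero $x_2$-component, while otherwise $v_\text{rel}$ enters $x_2$ only through the filter state $z$, which is instantaneously distinct from $v_\text{rel}$, forcing $v_{i-1}$ to lie entirely in $x_1$.

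For the sufficient direction I would show that, when $\Delta\tilde{y}_{i,(2)}\in\Delta y_1$, any transformation $T$ placing $v_i-v_\text{rel}$ into $x_2$ sends the step $\dot{v}_{i-1}=a_{i-1}$ into $A_{21}$, so that the only mode of $A_{11}$ reachable from $E_1$ is the stable $-1/\hat{\tau}_{i-1}$ inherited from $\dot{a}_{i-1}=-a_{i-1}/\hat{\tau}_{i-1}$; the remaining zero modes of $A_{11}$ correspond to absolute position directions that the disturbance never excites. For the necessary direction I would show that, when $\Delta\tilde{y}_{i,(2)}\in\Delta y_2$, the direction $v_{i-1}$ has a nonzero projection onto $x_1$ for any valid $T$, so that the step $\dot{v}_{i-1}=a_{i-1}$ stays inside $A_{11}$ and yields an integrator mode reachable from $E_1$, violating Assumption~\ref{ass:polesE}. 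The hard part will be making this last claim coordinate-free; I expect a PBH-type argument on the reduced triple $(\hat{A},E,\mathcal{C}_1)$ --- which carries the relevant invariant-zero structure of the extended triple $(A_e,E_e,C_e)$, since in the Rosenbrock pencil the filter row is always solvable in $\zeta_{i,2}$ once $z_0=0$ is imposed --- to yield the cleanest proof, with a direct enumeration over the finite set of admissible $(T_y,h)$ choices as a concrete backup.
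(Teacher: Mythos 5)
Your proposal is correct in its overall logic and shares the paper's first (and only rigorous) step verbatim: reduce existence of the observer and threshold to Assumption~\ref{ass:polesE}, since Assumption~\ref{ass:bound} is independent of the design choice $(T_y,h)$. Where you diverge is in how the equivalence between Assumption~\ref{ass:polesE} and $\Delta\tilde{y}_{i,(2)}\in\Delta y_1$ is established: the paper simply observes that there are finitely many admissible pairs $(T_y,h)$ and checks Assumption~\ref{ass:polesE} for each, reporting the outcome of that enumeration without exhibiting the computation --- i.e.\ the paper's main argument is exactly what you relegate to a ``concrete backup.'' Your primary route is a structural one: the uncertainty $\tilde{\eta}$ enters only through $\dot{a}_{i-1}$, the $\zeta_{i,2}$ columns of $E_e$ land entirely in $x_2$ (since $z$ is measured exactly), and the $E_1$-reachable subspace of $A_{11}$ therefore extends beyond the stable $a_{i-1}$ direction precisely when the integrator step $\dot{v}_{i-1}=a_{i-1}$ survives inside $A_{11}$, which happens iff $v_{\mathrm{rel}}$ is not an unfiltered output. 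This buys an explanation of \emph{why} the relative-velocity channel is the critical one, and it generalizes beyond this particular plant, whereas the enumeration buys certainty with no insight. Two cautions if you carry the structural route through: your phrasing that $v_{i-1}$ ``inherits a nonzero $x_2$-component'' via $v_i$ being ``always in the output'' is imprecise when $v_i\in y_{i,2}$ (then the state $v_i$ itself lies in $x_1$ and only the filter state $z_{v_i}$ is in $x_2$); the correct and sufficient statement is that $a_{i-1}$ drives no $x_1$ coordinate other than itself exactly when $v_{\mathrm{rel}}\in y_{i,1}$, because $\dot{v}_{\mathrm{rel}}=a_i-a_{i-1}$ is then an $A_{21}$ entry and $\dot{p}_{i-1}=v_i-v_{\mathrm{rel}}$ does not involve $a_{i-1}$. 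Second, like the paper, you take for granted the ``only if'' of the reduction (that violating Assumption~\ref{ass:polesE} genuinely precludes existence of the threshold rather than merely invalidating the sufficient conditions of Lemma~\ref{prop:err_bounds} and Theorem~\ref{thm:threshold}); neither proof closes that gap, so you are no worse off than the original.
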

\begin{proof}
For the observer and threshold to exist, assumptions \ref{ass:bound} and \ref{ass:polesE} should hold. Assumption \ref{ass:bound} are not considered here as they are independent of the observer design. Therefore, the observer and threshold exist when $T_y$ and $h$ are chosen such that assumption \ref{ass:polesE} holds.

As there is only a finite number of options for $T_y$ and $h$, Assumption \ref{ass:polesE} can be checked for each possible combination.\footnote{$T_y\in\mathbb{R}^{p\times p}$ is a permutation matrix, giving $p!$ possibilities for $T_y$. There are $p+1$ possibilities for $0\leq h\leq p$, giving $p+1!$ possible combinations.} By doing so, it can be found that for the considered scenario Assumption \ref{ass:polesE} holds iff $T_y$ and $h$ are chosen such that $y_{i,1}$ includes $\tilde{y}_{i,a,(2)}$. This means $\Delta y_1$ must include the attack on the relative velocity measurement $\Delta \tilde{y}_{i,(2)}$.
\end{proof}
The final design is chosen to make most attacks quantifiable. Using the result of lemma \ref{lem:equivalent} and theorem \ref{thm:obsdesign}, we can choose $h=p-1$ such that all attacks except $\Delta \tilde{y}_{i,(2)}$ are added to the state equation giving the design $h=3$, $T_y=\left[\begin{matrix}e_2&e_1&e_3&e_4\end{matrix}\right]$, where $e_i$ are the standard basis vectors.

In conclusion, the detector is designed such that the largest possible class of attacks is quantifiable. This is the case when $\Delta y_1=\Delta \tilde{y}_{i,(2)}$. With the chosen design assumptions \ref{ass:polesE}, \ref{ass:polesF}, and \ref{ass:rank} are inherently satisfied. Furthermore, Assumption \ref{ass:noH}, required for an attack to be quantifiable, is only satisfied when the relative velocity measurement $\tilde{y}_{i,(2)}$ is not attacked.
\section{Cyber-Attack Classification}\label{sec:attack}
\noindent In this section, the considered attacks will be classified according to definitions \ref{def:stealthy}-\ref{def:nonstealthy}, for the detector design presented in section \ref{sec:design}.

In theorem \ref{thm:attack_est} it is proven that attacks are detectable if assumption \ref{ass:noH} holds. For the chosen design this means that the relative velocity measurement $\tilde{y}_{i,(2)}$ is not attacked. Therefore, all attacks for which $\Delta \tilde{y}_{i,(2)} = 0$ are quantifiable.

Below, in lemma \ref{lem:undetectable} and theorem \ref{thm:undetectable} it will be shown that only attacks of a specific form are stealthy, and an analytical expression for these stealthy attacks is presented. 
\begin{lemma}\label{lem:undetectable}
Consider system \eqref{eq:SMO_system}. The following statements are equivalent:
\begin{enumerate}
    \item An attack is stealthy when $\zeta_{i,1}=0$
    \item An attack is stealthy when $\zeta_{i,1}\neq0$
\end{enumerate}
\end{lemma}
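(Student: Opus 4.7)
The plan is to exploit linearity of the closed-loop system in order to show that the difference between the attacked and unattacked measurements is independent of the noise realization, so that Definition~\ref{def:stealthy} collapses to the same condition whether $\zeta_{i,1}=0$ or $\zeta_{i,1}\neq 0$.

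Concretely, I would fix a common noise trajectory and compare two runs of \eqref{eq:full_CACC_model} starting from the same initial state, with the same preceding-car input $u_{i-1}$ and the same model uncertainty $\tilde{\eta}$: one with the attack $(\Delta u,\Delta\tilde y)$ present, one without. Let $\delta\tilde x \triangleq \tilde x(\Delta,t)-\tilde x(0,t)$, $\delta u_i \triangleq u_i(\Delta,t)-u_i(0,t)$, and $\delta\tilde y\triangleq \tilde y_{i,a}(\Delta,t)-\tilde y_{i,a}(0,t)$. Subtracting the two state equations gives a linear ODE for $\delta\tilde x$ forced only by $F\Delta u$ and by $\delta u_i$; subtracting the two output equations yields $\delta\tilde y=C\delta\tilde x+\Delta\tilde y$, in which the additive noise term $\tilde\zeta_i$ cancels. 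The controller of Section~\ref{sec:problem} is itself linear in its arguments, so the noise-dependent parts of $u_i$ cancel in $\delta u_i$ as well, leaving $\delta u_i$ as a linear functional of $\delta\tilde y$ and $\Delta u$ alone.

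Consequently the closed-loop difference system $(\delta\tilde x,\delta u_i)$ is driven purely by the attack $(\Delta u,\Delta\tilde y)$, and $\delta\tilde y(t)$ depends only on this attack and on the (shared) initial conditions, not on the realization of $\tilde\zeta_i$ or equivalently $\zeta_{i,1}$. The stealthiness condition $\delta\tilde y\equiv 0$ in Definition~\ref{def:stealthy} is therefore the same in the noise-free and the noisy setting, which is precisely the equivalence (1)$\Leftrightarrow$(2).

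The only subtle point I anticipate is the feedback through the controller: because $u_i$ is computed from the attacked measurements, the individual signals $u_i(\Delta,t)$ and $u_i(0,t)$ each depend on the noise, and one might worry that this contaminates the state difference. The resolution is to notice that only the \emph{difference} $\delta u_i$ enters the $\delta\tilde x$ dynamics, and by linearity of the control law this difference is a function of $\delta\tilde y$ and $\Delta u$ alone. Once this cancellation is made explicit, the conclusion reduces to an immediate application of superposition.
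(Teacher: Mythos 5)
Your proposal is correct and rests on the same core observation as the paper's proof: the noise enters the output additively, so it cancels in the difference between the attacked and unattacked measurements, and the stealthiness condition of Definition~\ref{def:stealthy} reduces to the same noise-free condition in both cases. The paper carries this out only at the level of the output equation of \eqref{eq:SMO_system}: it writes $y(\Delta,\Delta y_1,t)=y(0,0,t)$ in both cases and cancels the common term $D\zeta_{i,1}$, concluding that both statements reduce to $x_2(\Delta)+H\Delta y_1=x_2(0)$. Your version goes one step further and is, strictly speaking, more complete: you observe that $x_2(\Delta)$ and $x_2(0)$ are themselves noise-dependent through the controller feedback, and you justify via linearity of the closed loop that the \emph{difference} trajectory $\delta\tilde x$ (and hence $x_2(\Delta)-x_2(0)$) is driven only by the attack, so that the residual condition $x_2(\Delta)+H\Delta y_1=x_2(0)$ is genuinely the same constraint on the attack whether or not noise is present. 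The paper leaves this closed-loop cancellation implicit; your difference-system argument makes it explicit and is the right way to close that gap, at the mild cost of having to invoke linearity of the specific controller (the paper's statement-level claim is controller-agnostic only because it never examines how $x_2(\Delta)$ is generated).
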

\begin{proof}
An attack is stealthy if $y(\Delta,\Delta y_1,t)=y(0,0,t)$. 
\begin{enumerate}
    \item When $\zeta_{i,1}= 0$, $y(\Delta,\Delta y_1,t)=y(0,0,t)$ is equivalent to $x_2(\Delta)+c+H\Delta y_1=x_2(0)+c\leftrightarrow x_2(\Delta)+H\Delta y_1=x_2(0)$.
    \item When $\zeta_{i,1}\neq 0$ this is equivalent to $x_2(\Delta)+c+D\zeta_{i,1}+H\Delta y_1=x_2(0)+c+D\zeta_{i,1} \leftrightarrow x_2(\Delta)+H\Delta y_1=x_2(0)$.
\end{enumerate} 
The final conditions resulting from both statements are the same, therefore the statements are equivalent.
\end{proof}

\begin{theorem}\label{thm:undetectable}
Consider the system \eqref{eq:SMO_system} with the observer in equation \eqref{eq:SMO} using the design from section \ref{sec:design} where $y_{i,1}=\tilde{y}_{i,a,(2)}$. Furthermore, consider $\nu_\text{fil}$ as defined in equation \eqref{eq:EOI}.
Then, the following statements are equivalent
\begin{enumerate}
    \item An attack is stealthy
    \item $\nu_\text{fil}(\Delta,\Delta y_1,t)=\nu_\text{fil}(0,0,t)~ \forall t\in\mathbb{R}_{\geq0}$
    \item The attack is designed as \begin{equation*}
\left\{\begin{aligned}
    \Delta_{(1)}&=-\hat{\tau}_{i-1}\Delta \ddot{y}_1-\Delta \dot{y}_1\\
    \dot{\Delta}_{(2)} &= -\Delta y_1\\
    \Delta_{(3)} &= \Delta_{(4)} = 0
    \end{aligned}\right.
\end{equation*}
\end{enumerate}
\end{theorem}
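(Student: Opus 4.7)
The plan is a round-robin (1)$\Rightarrow$(2)$\Rightarrow$(3)$\Rightarrow$(1), working in the noiseless setting thanks to Lemma \ref{lem:undetectable}. Throughout, I would let $\delta\tilde{x}$, $\delta z$, and $\delta x$ denote the pointwise differences of the extended and transformed state trajectories under attack $(\Delta,\Delta y_1)$ versus under the healthy scenario $(0,0)$, all initialised at zero and driven by the same exogenous $u$.

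The first implication (1)$\Rightarrow$(2) is essentially determinism. The observer \eqref{eq:SMO}, the switching term $\nu=-\rho\,\text{sgn}(e_y)$, and the filter \eqref{eq:EOI} are all functions only of $u$ and $y$. If $\tilde{y}_{i,a}$ is unchanged by the attack, then so is the augmented output $y=[y_{i,1};z]$ (the filter $\dot{z}=A_f(z-y_{i,2})$ with common initial condition produces the same $z$ whenever $y_{i,2}$ is unchanged), and hence $\nu_\text{fil}(\Delta,\Delta y_1,t)=\nu_\text{fil}(0,0,t)$ for all $t\geq 0$.

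For (2)$\Rightarrow$(3) I would first use invertibility of the stable filter \eqref{eq:EOI} with $A_\nu\prec 0$ to upgrade pointwise equality of $\nu_\text{fil}$ to almost-everywhere equality of $\nu$, hence of $\text{sgn}(e_y)$; combined with the linear error dynamics \eqref{eq:SMO_err_dyn} this forces $e_y$ itself to coincide, and therefore $y$ to coincide. The output-matching conditions then read $\mathcal{C}_1\delta\tilde{x}+\Delta y_1=0$ and $\mathcal{C}_2\delta\tilde{x}+\Delta y_2=0$, i.e.\ $C\delta\tilde{x}=-\Delta\tilde{y}$ in the original coordinates. Exploiting that $F$ injects only into $\dot{a}_{i-1}$ while $\hat{A}$ is block-diagonal between the two cars and $\delta\tilde{x}(0)=0$, the car-$i$ subchain sees no attack input, so $\delta\tilde{x}_i\equiv 0$; the own-velocity and own-acceleration rows of $C\delta\tilde{x}=-\Delta\tilde{y}$ then immediately give $\Delta_{(3)}=\Delta_{(4)}=0$. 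The remaining two rows identify $\delta p_{i-1}$ with $\Delta_{(2)}$ and $\delta v_{i-1}$ with $\Delta y_1$ (with the sign convention dictated by $T_y$), and running these through the car-$(i-1)$ integrator chain $\dot{\delta p}_{i-1}=\delta v_{i-1}$, $\dot{\delta v}_{i-1}=\delta a_{i-1}$, $\dot{\delta a}_{i-1}=-\hat{\tau}_{i-1}^{-1}(\delta a_{i-1}+\Delta_{(1)})$ yields the first-order relation $\dot{\Delta}_{(2)}=-\Delta y_1$ and, after two further differentiations, the explicit expression $\Delta_{(1)}=-\hat{\tau}_{i-1}\Delta\ddot{y}_1-\Delta\dot{y}_1$. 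The closing step (3)$\Rightarrow$(1) is a routine verification: I would substitute the prescribed attack into the difference dynamics and check that $C\delta\tilde{x}+\Delta\tilde{y}\equiv 0$, which by the same filter-invertibility argument used for $z$ is equivalent to stealthiness of $\tilde{y}_{i,a}$.

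The main obstacle is the step (2)$\Rightarrow$(3), specifically promoting pointwise equality of the filtered signal $\nu_\text{fil}$ to equality of $e_y$ trajectories so that the algebraic characterisation becomes available; this is where the stability of $A_\nu$ and $A_{22}^s$ must be used carefully. Everything else is structured bookkeeping: tracking $T_y$, the transformation $T$, and the embeddings $H$ and $F$ accurately enough that the abstract components $\Delta_{(k)}$ are identified with the physical attacks $\Delta u$, $\Delta\tilde{y}_{i,(1)}$, $\Delta\tilde{y}_{i,(3)}$, $\Delta\tilde{y}_{i,(4)}$ with the correct signs, so that the third-order integrator chain of car $(i-1)$ delivers the claimed second-derivative relation for $\Delta_{(1)}$ without sign slips.
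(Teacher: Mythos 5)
Your proposal is correct and follows essentially the same route as the paper's proof: reduce to the noiseless case via Lemma \ref{lem:undetectable}, relate equality of $\nu_\text{fil}$ to equality of $e_y$ (and hence of $y$) by inverting the EOI filter and using that the healthy $e_y$ vanishes on the sliding surface, and characterise stealthiness by differentiating the output-matching condition $x_2(\Delta)-x_2(0)=-H\Delta y_1$ twice while exploiting that car $i$'s own state chain is unaffected by the attack. The only difference is organisational (a round-robin cycle instead of the paper's two pairwise equivalences $1\Leftrightarrow 3$ and $1\Leftrightarrow 2$), which does not change the substance.
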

\begin{proof}
Using the result from lemma \ref{lem:undetectable}, without loss of generality, we can consider the system where $\zeta_{i,1}=0$. First, prove the equivalence between statements 1) and 3).

A stealthy attack is defined as an attack for which $y(x_2(\Delta),\Delta y_1)=y(x_2(0),0)$. Using system \eqref{eq:SMO_system} this condition can be rewritten as an explicit condition on the cyber-attack. 
\begin{equation*}
    \begin{aligned}
    x_2(\Delta)+c+H\Delta y_1 = x_2(0)+c\\
    x_2(\Delta)-x_2(0) = -H\Delta y_1
    \end{aligned}
\end{equation*}
By taking the double time derivative of this relation we get
\begin{equation*}
    \begin{aligned}
    A_{21}A_{11}(x_1(\Delta)-x_1(0)) =&\\
    -H\Delta \ddot{y}_1+A_{22}H\Delta \dot{y}_1 +& A_{21}A_{12}H\Delta y_1- F_2\dot{\Delta} - A_{21}F_1 \Delta
    \end{aligned}
\end{equation*}
This relation can be written explicitly for the system \eqref{eq:SMO_system} as
\begin{equation*}
    \left\{\hspace{-0.1cm}\begin{aligned}
    \Delta \ddot{y}_1&=-\frac{1}{\hat{\tau}_{i-1}}(\Delta \dot{y}_1+\Delta_{(1)})+\frac{1}{\tau_{i}}(x_{1,(5)}(\Delta)-x_{1,(5)}(0))\\
    \dot{\Delta}_{(2)} &=\Delta y_1\\
    \dot{\Delta}_{(3)} &= x_{1,(5)}(\Delta)-x_{1,(5)}(0)\\
    \dot{\Delta}_{(4)} &= -\frac{1}{\tau_{i}}(x_{1,(5)}(\Delta)-x_{1,(5)}(0))\\
    \end{aligned}\right.
\end{equation*}
Furthermore, the dynamics of $x_1(\Delta)-x_1(0)$ if can be found that $x_{1,(5)}$ is not affected by the attacks, and thus $x_{1,(5)}(\Delta)-x_{1,(5)}(0)=0$, allowing to simplify
\begin{equation*}
    \left\{\begin{aligned}
    \Delta \ddot{y}_1&= -\frac{1}{\hat{\tau}_{i-1}}(\Delta \dot{y}_1+\Delta_{(1)})\\
    \dot{\Delta}_{(2)} &= \Delta y_1\\
    \Delta_{(3)} &= 0 ~;~
    \Delta_{(4)} = 0\\
    \end{aligned}\right.
\end{equation*}
which is equivalent to the system in statement 3). This proves equivalence between statements 1) and 3). To prove equivalence between statements 1) and 2), use that $e_y(0,0)=0$ once the steady state is reached (Lemma \ref{lem:attack_est}) below.
\begin{equation*}
    \begin{aligned}
        0=&\nu_\text{fil}(e_y(\Delta,\Delta y_1))-\nu_\text{fil}(e_y(0,0))\\
        0=&\dot{\nu}_\text{fil}(e_y(\Delta,\Delta y_1))-\dot{\nu}_\text{fil}(e_y(0,0))\\
        0=&A_v\rho (\text{sgn}(e_y(\Delta,\Delta y_1))-\text{sgn}(e_y(0,0)))\\
        0=&\text{sgn}(e_y(\Delta,\Delta y_1))\\
        0=&e_y(\Delta,\Delta y_1)=e_y(0,0)\\
        &\hat{y}(0)-y(x_2(\Delta),\Delta y_1)=\hat{y}(0)-y(x_2(0),0)\\
        0=&y(x_2(\Delta),\Delta y_1)-y(x_2(0),0)
    \end{aligned}
\end{equation*}
This proves the equivalence between statements 1) and 2)
\end{proof}

Lastly, all attacks that are neither quantifiable nor stealthy, are non-stealthy.
\section{Simulation Study}\label{sec:sim}
\noindent In this section simulation results will be presented with the aim to analyse how harmful attacks can be to the system while avoiding detection. To this end, three simulation scenarios will be shown, each with a carefully crafted attack within one of the attack classes. The attacks are crafted using the attacker objectives as described in section \ref{ssec:attack_goal}.

Below, first the considered CACC platoon will be presented, including all controller and observer gains. Then, the different cyber-attacks are introduced, and the corresponding simulation results are analysed.

\subsection{CACC Scenario and Parameters}
\noindent For the presented CACC scenario, the considered platoon is driving at a constant speed of $8.5 \left[\frac{m}{s}\right]\approx 30\left[\frac{km}{h}\right]$. This speed is a common speed limit within urban environments.
All parameters used in simulation are shown in table \ref{tab:param}.
\subsection{Attacker Goal}\label{ssec:attack_goal}
\noindent The cyber-attacks are designed by the attacker with the goal of crashing the cars, i.e. $\tilde{y}_{i,(1)}\leq 0$. To obtain this goal the applied attacks should send car $i$ information that signals that car $i-1$ is further away than desired, or moving away from car $i$. The attack that achieves this is
\begin{equation} \label{eq:attack_goal}
    \begin{aligned}
    \Delta \tilde{y}_{(1)} ,\Delta \tilde{y}_{(2)}, \Delta u& ~~ \text{maximal} \\
    \Delta \tilde{y}_{(3)} ,\Delta \tilde{y}_{(4)}& ~~ \text{minimal}
    \end{aligned}
\end{equation}
\begin{table}[h]
    \centering
    \caption{Parameters used in simulation}
    \label{tab:param}
    \begin{tabular}{c|c||c|c}
        Parameter & Value & Parameter & Value \\\hline
        $p_0(0)$ & $0\,[m]$ & $p_1(0)$&$-11.45\,[m]$\\
        $v_0(0)$ & $8\,\left[\frac{m}{s}\right]$ & $v_1(0)$ & $8\,\left[\frac{m}{s}\right]$\\
        $a_0(0)$ & $0\,\left[\frac{m}{s^2}\right]$& $a_1(0)$&$0\,\left[\frac{m}{s^2}\right]$\\\hline
        $\tau_0$ &$0.1\,[s]$ & $\tau_1$& $0.1\,[s]$\\
        $L_1$ & $4\,[m]$ &$r_{\tau}$& $1.1\,[-]$\\\hline
        $h$ & $0.7\,[s]$ & $r$& $1.5\,[m]$\\
        $k_p$ & $0.2\,[s^{-2}]$ &$k_d$ & $0.7\,[s^{-1}]$\\\hline
        $\bar{\Delta}$ & \multicolumn{3}{c}{$[10~10~10~10]^\top\,\left[\frac{m}{s^2}~m~\frac{m}{s}~\frac{m}{s^2}\right]$}\\
         $\bar{\eta}$&\multicolumn{3}{c}{$[1~0.15~0.03~0.15]^\top\,\left[\frac{m}{s^2}~m~\frac{m}{s}~\frac{m}{s^2}\right]$} \\
        $\bar{\zeta}_1$ & $0.3\frac{m}{s}$&$A_\text{fil}$&$-5\cdot I_3 [s^{-1}]$\\\hline
        $A_\nu$ & $I_4\,[s^{-1}]$& $A_{22}^s$& $- I_4\,[s^{-1}]$\\
        $M$ & \multicolumn{3}{c}{$\text{diag}([11.5~11~11~11])\,\left[m~\frac{m}{s}~\frac{m}{s}~\frac{m}{s^2}\right]$}\\
    \end{tabular}
\end{table}
\subsection{Simulation with quantifiable Attack}
\noindent Simulation results for a quantifiable attack are shown on the left in figures 1-3. All attack signals are chosen to be steps. The size and sign of the attacks are chosen according to the goal in equation \eqref{eq:attack_goal}, while staying undetected.

In figure 1, it can be seen that the attacks are correctly estimated. Furthermore, from figure 2 it can be seen that the attacks affect the residual, but are not detected by the thresholds. Lastly, in figure 3 it can be seen that the attack causes a crash between the cars. However, the relative velocity at the time of the crash is low.

In general, it can be concluded that for quantifiable attacks significant, but limited harm can be done to the system. The maximum attack impact can be limited by reducing the system uncertainty, which causes tighter thresholds.

\subsection{Simulation with Stealthy Attack}
\noindent Simulation results for a stealthy attack are shown on the right in figures 1-3. This attack is designed as presented in theorem \ref{thm:undetectable}, where $\Delta y_1$ is a filtered ramp with positive slope.

In figure 1, it can be seen that none of the attacks are correctly estimated. Furthermore, from figure 2 it can be seen that the attack does not affect the EOI. Therefore, the EOI stays well between the thresholds. Lastly, in figure 3 it can be seen that the stealthy attack causes a high speed crash.

In general, it can be concluded that a stealthy attack can harm the system without bounds. However, perfect model knowledge is required by the attacker to perform this attack.

\subsection{Simulation with non-stealthy Attack}
\noindent Simulation results for a non-stealthy attack are shown in the middle in figures 1-3. The design of a non-stealthy attack is complicated by the non-zero $\Delta y_1$, because  $\nu_{\text{fil},(2)}$ depends on the integral of $\Delta y_1$. Therefore, detection can only be avoided if the integral of $\Delta y_1$ is bounded, meaning a constant attack is not feasible. Instead a sinusoidal attack has been chosen.
It was chosen to set $\Delta_{(1)}$ and $\Delta_{(2)}$ to zero to maximize the potential of the attack on $\Delta y_1$.\footnote{Non-zero choices for $\Delta_{(1)}$ and $\Delta_{(2)}$ would lead to a scenario closer to the quantifiable attack presented in this section.} Furthermore, $\Delta_{(3)}$ and $\Delta_{(4)}$ were chosen the same as for the quantifiable attack.

In figure 1, it can be seen that only attacks $\Delta_{(3)}$ and $\Delta_{(4)}$ are correctly estimated. Furthermore, from figure 2 it can be seen that the attack affects the residual, but is not detected by the threshold. Lastly, in figure 3 it can be seen that this attack has almost no effect on the behaviour of the platoon.

In general, it can be concluded that for non-stealthy attacks the potential harm while avoiding detection is very limited.
\vspace{-0.1cm}
\begin{figure*}
  \begin{minipage}{.33\textwidth}
	\includegraphics[width=1\columnwidth,height=4cm, trim={0cm 0.1cm 7cm 20.5cm},clip]{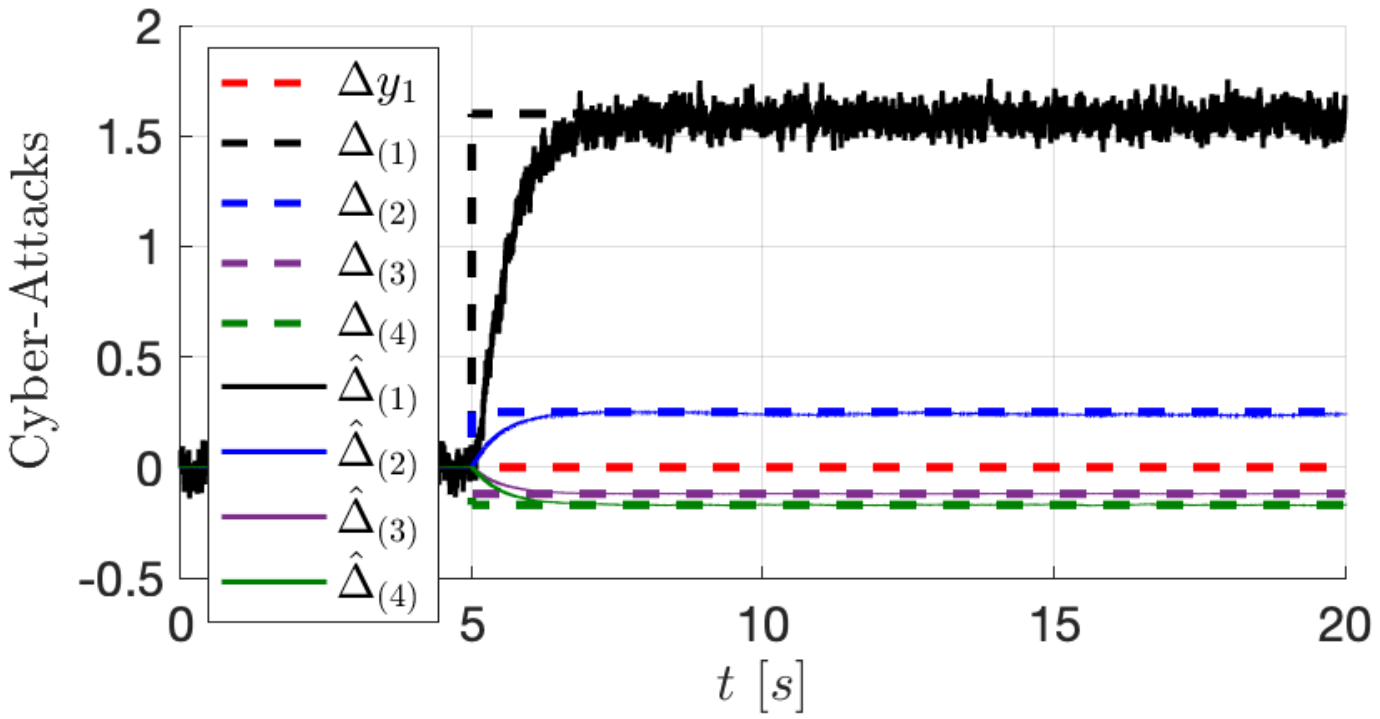}
  \end{minipage} \quad
  \begin{minipage}{.33\textwidth}
    \includegraphics[width=1\columnwidth,height=4cm, trim={1.2cm 0.1cm 7cm 20.5cm},clip]{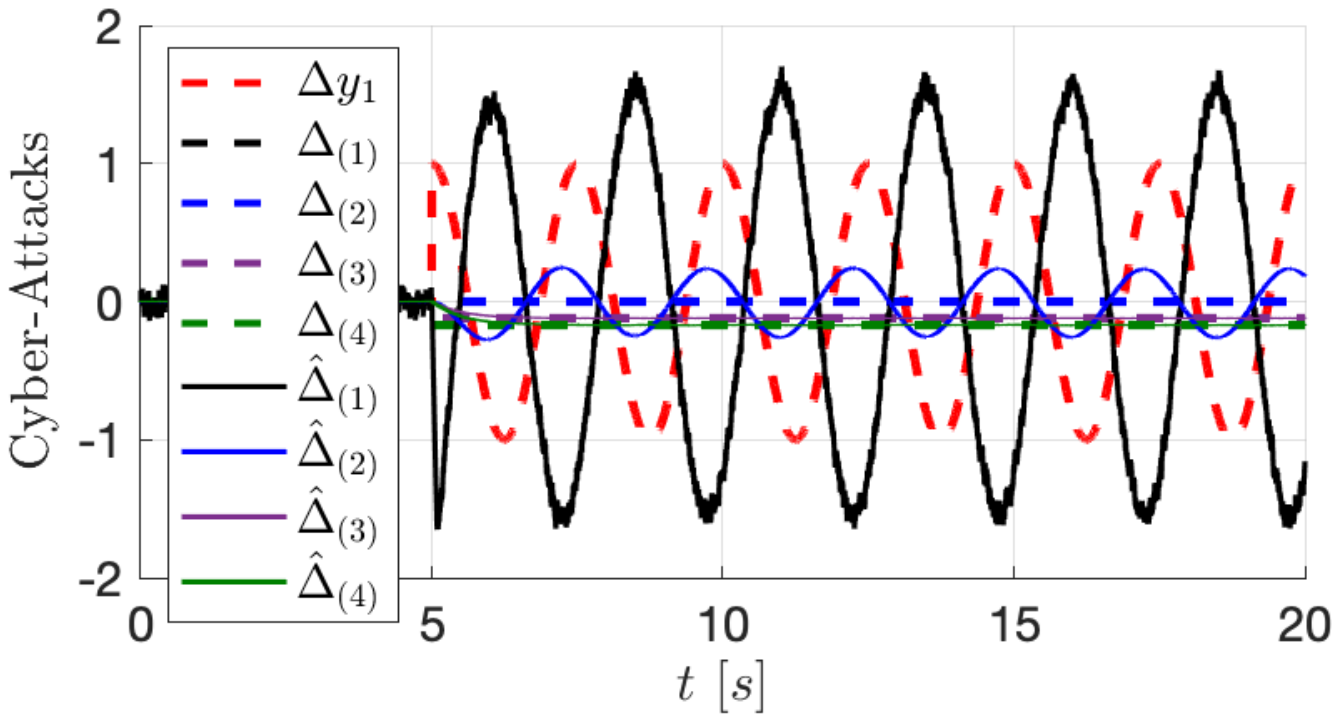}
  \end{minipage}
  \begin{minipage}{.33\textwidth}
   \includegraphics[width=1\columnwidth,height=4cm, trim={1cm 0.1cm 7cm 20.5cm},clip]{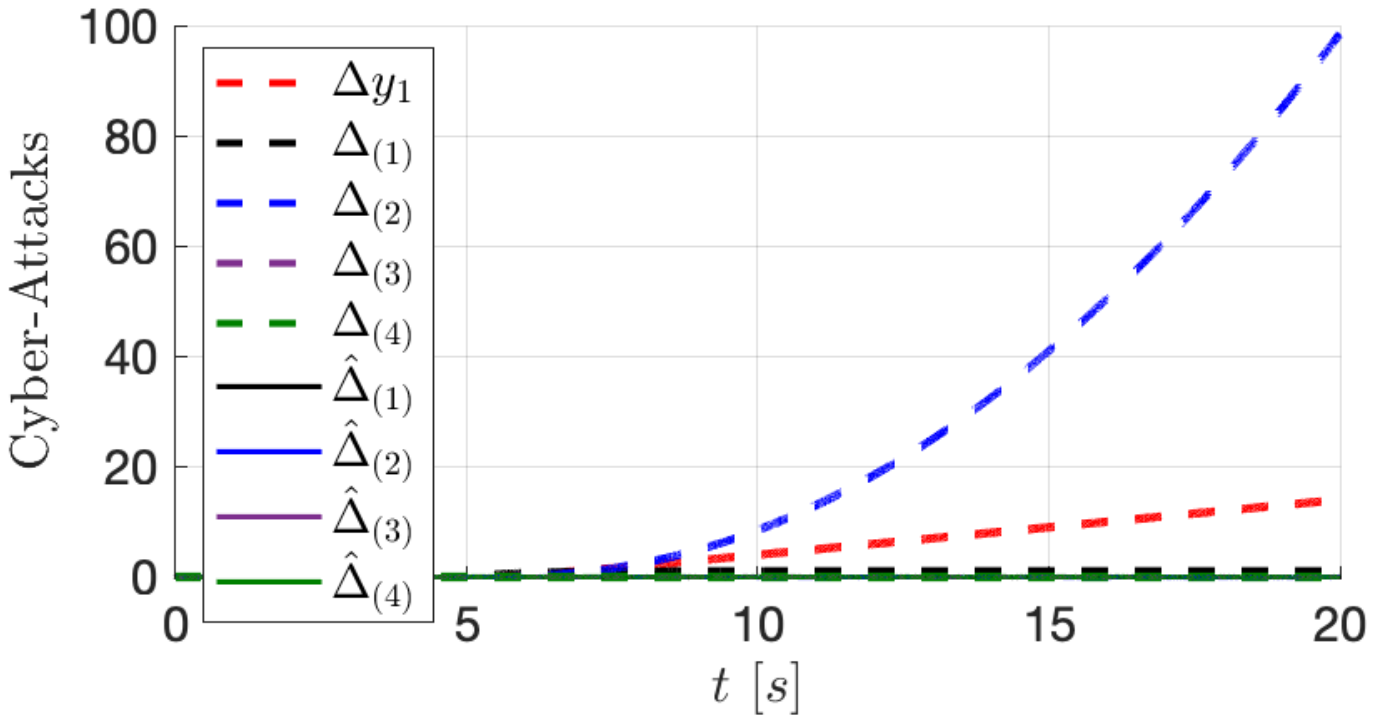}
  \end{minipage}
  \label{fig:estimate}
  \caption{The applied cyber-attacks, and the EOI-based estimates. left: Quantifiable, Middle: non-stealthy, right: Stealthy}
\end{figure*}%
\begin{figure*}
\vspace{-0.35cm}
  \begin{minipage}{.33\textwidth}
	\includegraphics[width=1\columnwidth,height=4cm, trim={0cm 0.1cm 7cm 18.2cm},clip]{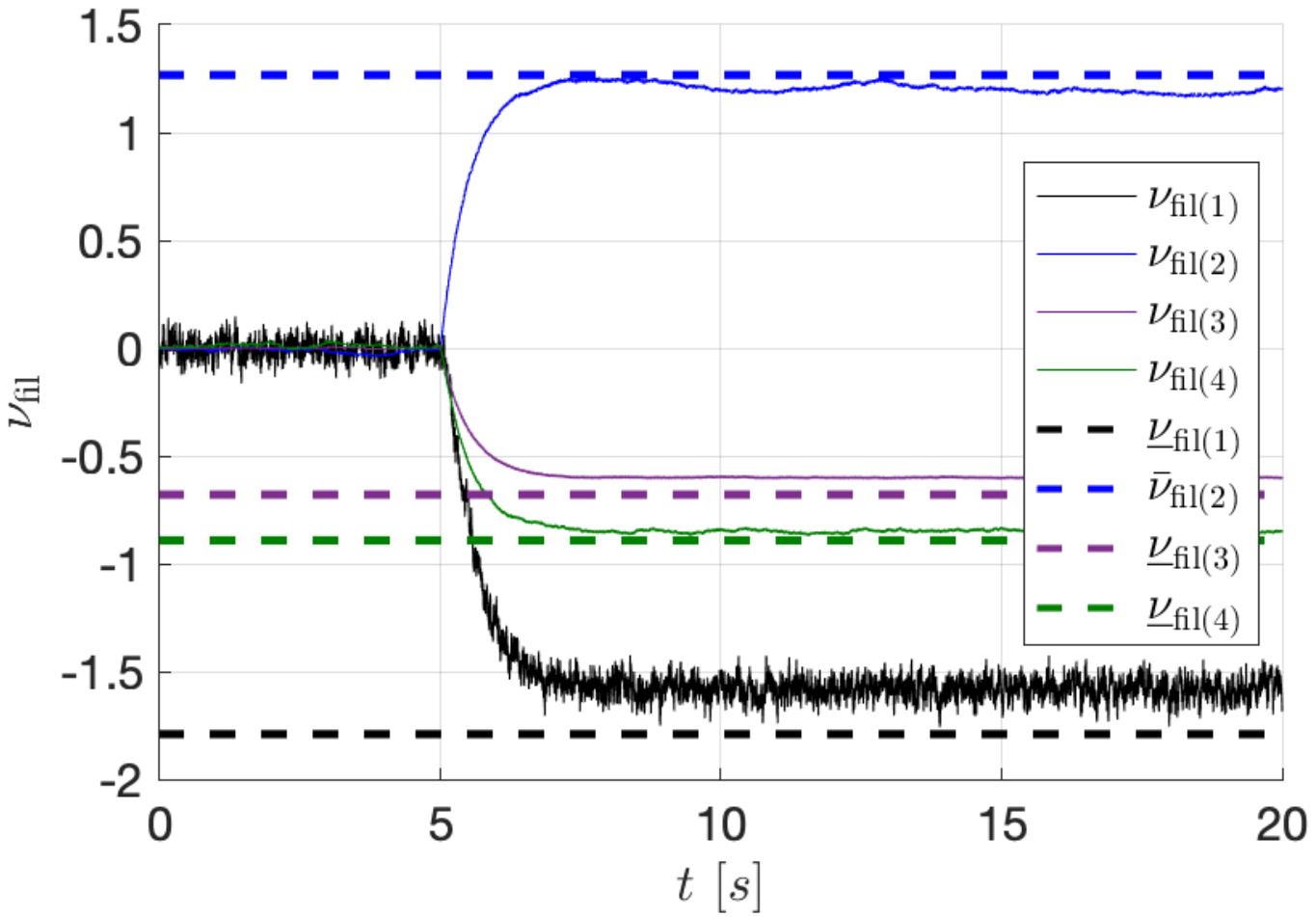}
  \end{minipage} \quad
  \begin{minipage}{.33\textwidth}
    \includegraphics[width=1\columnwidth,height=4cm, trim={1.2cm 0.1cm 7cm 18.2cm},clip]{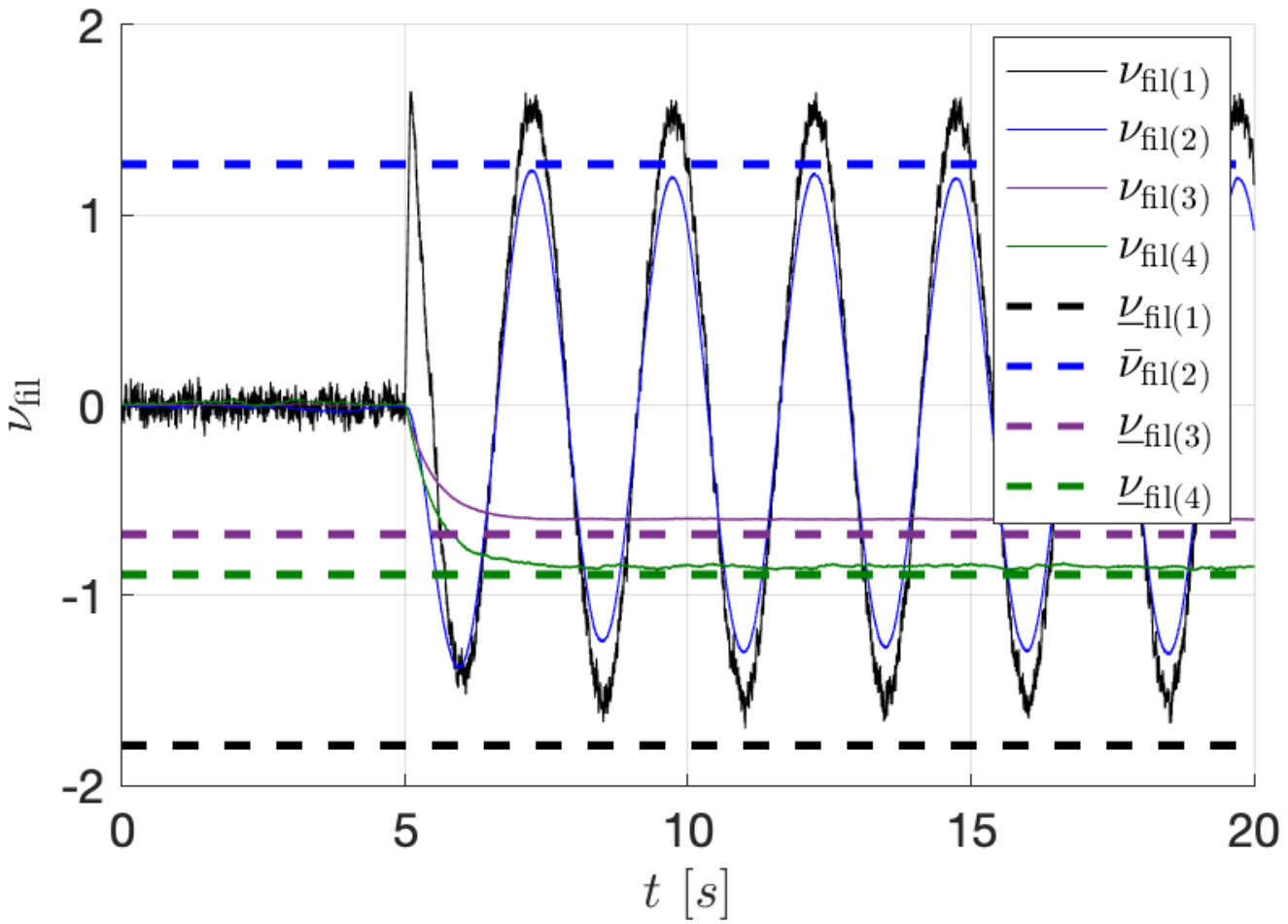}
  \end{minipage}
  \begin{minipage}{.33\textwidth}
   \includegraphics[width=1\columnwidth,height=4cm, trim={1cm 0.1cm 7cm 18.2cm},clip]{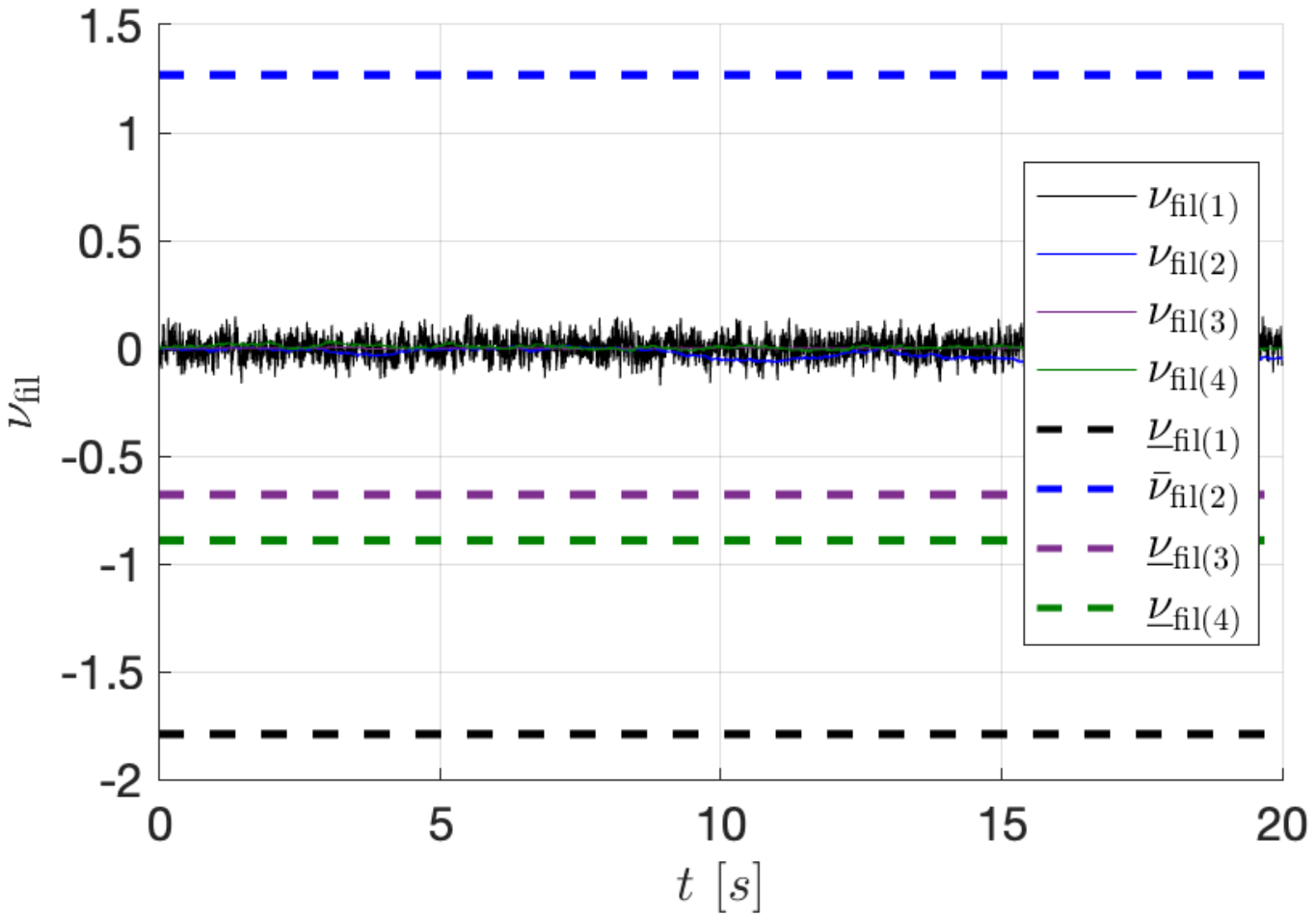}
  \end{minipage}
  \label{fig:eoi}
  \caption{The EOI response to the applied cyber-attacks, and the corresponding thresholds. left: Quantifiable, Middle: non-stealthy, right: Stealthy}
\end{figure*}%
\begin{figure*}
\vspace{-0.35cm}
  \begin{minipage}{.33\textwidth}
	\includegraphics[width=1\columnwidth,height=4cm, trim={0cm 0.1cm 7cm 18.2cm},clip]{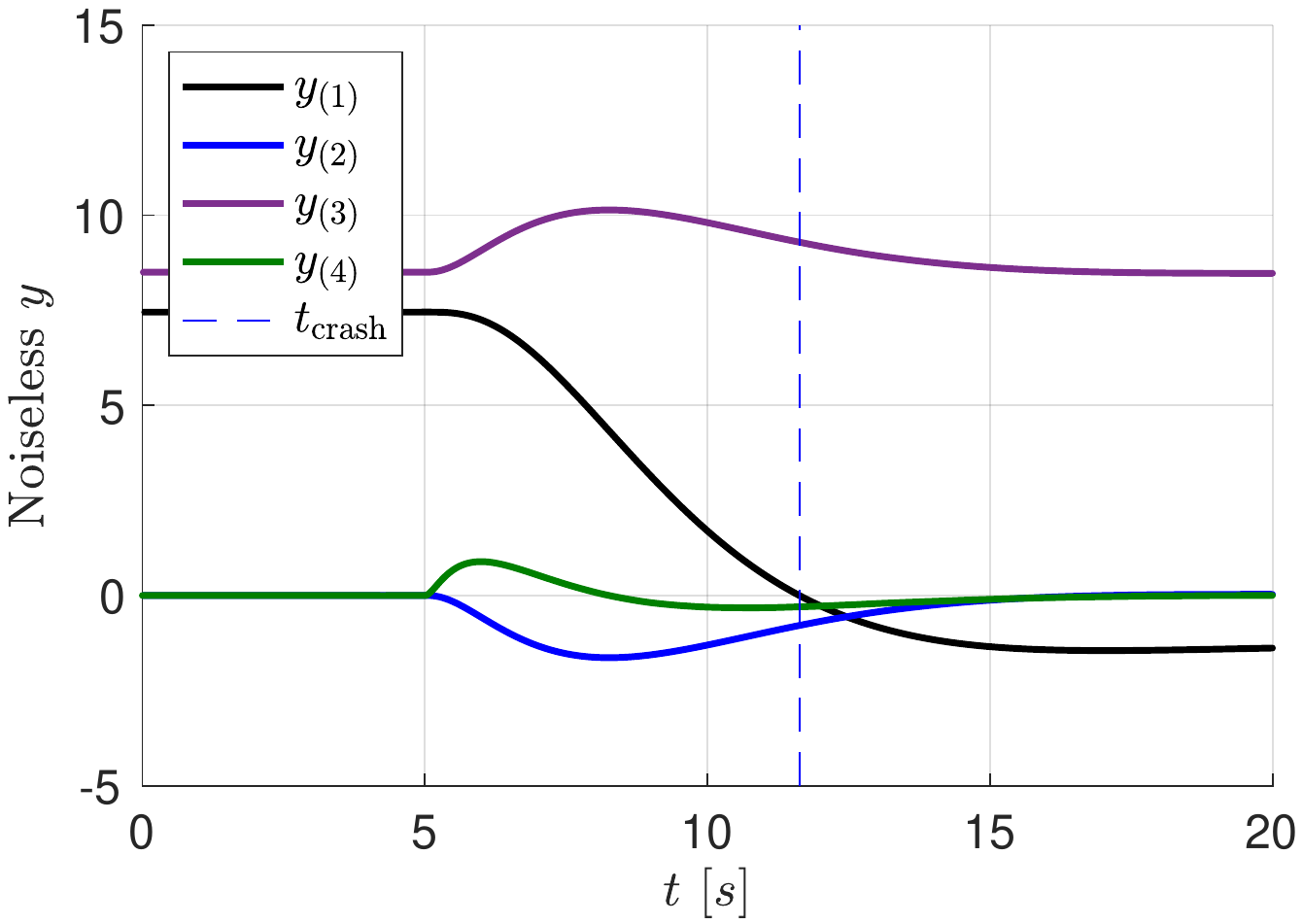}
  \end{minipage} \quad
  \begin{minipage}{.33\textwidth}
    \includegraphics[width=1\columnwidth,height=4cm, trim={1cm 0.1cm 7cm 18.2cm},clip]{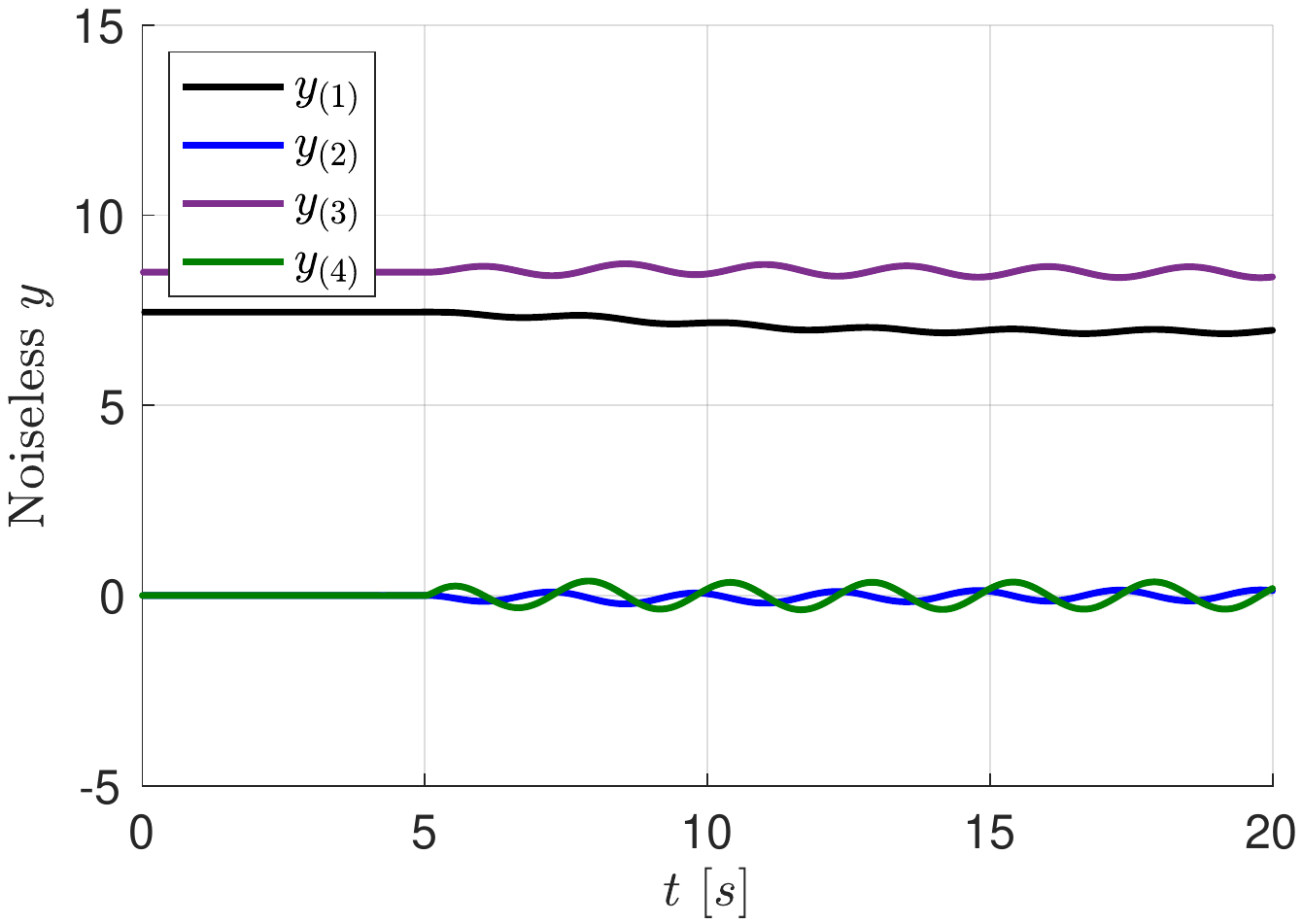}
  \end{minipage}
  \begin{minipage}{.33\textwidth}
   \includegraphics[width=1\columnwidth,height=4cm, trim={1cm 0.1cm 7cm 18.2cm},clip]{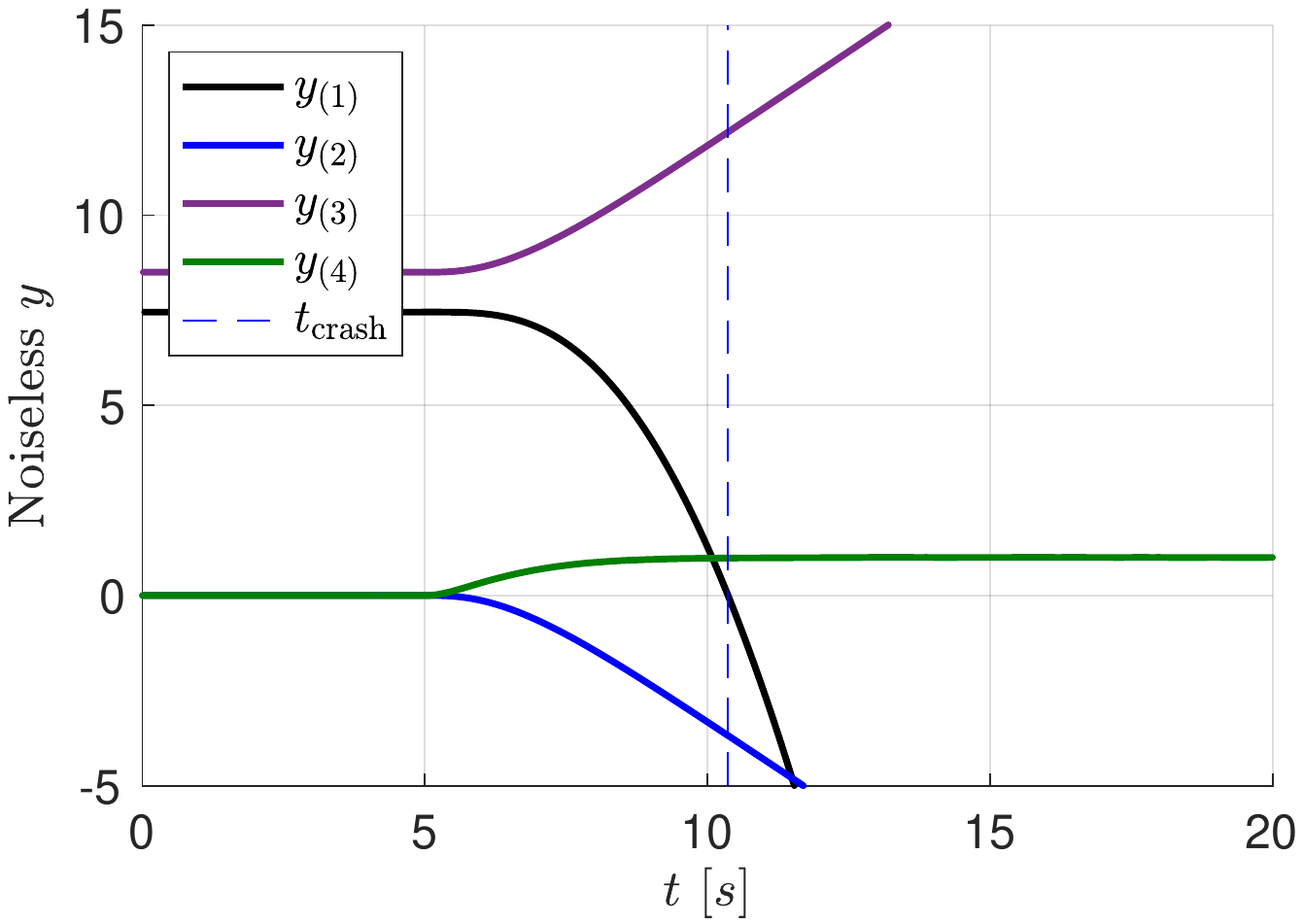}
  \end{minipage}
  \label{fig:measurements}
  \caption{The vehicle response to the applied cyber-attacks. left: Quantifiable, Middle: non-stealthy, right: Stealthy}
  \vspace{-0.6cm}
\end{figure*}
\section{Conclusion}\label{sec:conclusion}
\noindent Cyber-attacks form a great threat to the safe operation of cyber physical systems, such as autonomous vehicle platoons. By attacking communication channels or local sensors, the behaviour of the vehicles can be altered, potentially causing crashes. To address this issue, in this paper, the performance of a Sliding Mode Observer (SMO)-based detection method from previous work against malicious attacks is analysed. To this end, a a Collaborative Adaptive Cruise Control (CACC) platoon is considered where simultaneous attacks on communication and local sensors is possible.

It has been shown that this SMO-based approach can achieve detection and estimation for a meaningful class of attacks. Amongst others, it has been proven that the class of \emph{stealthy} attacks is equivalent to the class of attacks for which the attacked and healthy residual are indistinguishable. Furthermore, it has been proven that attacks are \emph{quantifiable} as long as the relative velocity measurement is not attacked.

Based on simulation scenarios with carefully crafted malicious attacks within each class, it is concluded that: quantifiable can cause significant, but limited harm to the system; stealthy attacks can cause unlimited harm to the system, but require model knowledge by the attacker; the remaining, non-stealthy, attacks cause very limited harm to the system.

In conclusion, while a large part of cyber-attacks can be estimated and/or detected by the proposed SMO-based cyber-attack detection method, it is unable to detect all possible attacks. In future research the specific structure of the stealthy attack might be used to identify specifically such attacks using other detection methods.

\bibliographystyle{IEEEtran}
\bibliography{references}
\end{document}